\numberwithin{equation}{section}
\newtheorem{lemma}{Lemma}
\newtheorem{theorem}{Theorem}
\definecolor{myblue}{gray}{0.70}
\titleformat{\paragraph} [hang] {\normalfont\normalsize\bfseries} {\theparagraph} {1em} {} 
\definecolor{myblue}{gray}{0.70}
\definecolor{aquamarine}{rgb}{0.5, 1.0, 0.83}
\definecolor{goldenyellow}{rgb}{1.0, 0.87, 0.0}
\renewcommand*\nompreamble{\begin{multicols}{2}}
\renewcommand*\nompostamble{\end{multicols}}
\journal{*}
\begin{document}
	\onehalfspacing
	
	\begin{frontmatter}
		
		
		\title{Modelling COVID-19 Transmission Dynamics in Ghana}
		
		\author[som,fwwrg,dsas]{Edward Acheampong\corref{cor1}}
		\ead{Edward.Acheampong1@nottingham.ac.uk/eoacheampong@ug.edu.gh}
		\author[dms]{Eric Okyere}
		\author[dsas]{Samuel Iddi}
		\author[dvnmi]{Joseph H. K. Bonney}
		\author[som]{Jonathan A. D. Wattis}
		\ead{Jonathan.Wattis@nottingham.ac.uk}
		\author[fwwrg]{Rachel L. Gomes}
		\ead{Rachel.Gomes@nottingham.ac.uk}
		
		\cortext[cor1]{Corresponding authors}
		
		\address[som]{School of Mathematical Sciences, University of Nottingham, \\ University Park, Nottingham, NG7 2RD, UK}
		\address[fwwrg]{Food Water Waste Research Group, Faculty of Engineering, \\ University of Nottingham, University Park, Nottingham, NG7 2RD, UK}
		\address[dms]{Department of Mathematics and Statistics
			University of Energy \\ and Natural Resources, P.O. Box 214, Sunyani, B/A Ghana}
		\address[dsas]{Department of Statistics and Actuarial Science
			University of Ghana, \\ P.O. Box LG 115, Legon, Ghana}
		\address[dvnmi]{Virology Department, Noguchi Memorial Institute For Medical Research, \\ University of Ghana, P.O. Box LG 581, Legon, Ghana}
		
		\begin{abstract}
			In late 2019, a novel coronavirus, the SARS-CoV-2 outbreak was identified in Wuhan, China and later spread to every corner of the globe. Whilst the number of infection-induced deaths in Ghana, West Africa are minimal when compared with the rest of the world, the impact on the local health service is still significant.  Compartmental models are a useful framework for investigating transmission of diseases in societies. To understand how the infection will spread and how to limit the outbreak. We have developed a modified SEIR compartmental model with nine compartments (CoVCom9) to describe the dynamics of SARS-CoV-2 transmission in Ghana. We have carried out a detailed mathematical analysis of the CoVCom9, including the derivation of the basic reproduction number, $\mathcal{R}_{0}$.  In particular, we have shown that the disease-free equilibrium is globally asymptotically stable when $\mathcal{R}_{0}<1$ via a candidate Lyapunov function. Using the SARS-CoV-2 reported data for confirmed-positive cases and deaths from March 13 to August 10, 2020, we have parametrised the CoVCom9 model. The results of this fit show good agreement with data. We used Latin hypercube sampling-rank correlation coefficient (LHS-PRCC) to investigate the uncertainty and sensitivity of $\mathcal{R}_{0}$ since the results derived are significant in controlling the spread of  SARS-CoV-2. We estimate that over this five month period, the basic reproduction number is given by $\mathcal{R}_{0} = 3.110$,  with the 95\% confidence interval being $2.042 \leq \mathcal{R}_0 \leq 3.240$, and the mean value being $\mathcal{R}_{0}=2.623$.  Of the 32 parameters in the model, we find that just six have a significant influence on $\mathcal{R}_{0}$, these include the rate of testing, where an increasing testing rate contributes to the reduction of $\mathcal{R}_{0}$. 
		\end{abstract}
		
		\begin{keyword}
			Transmission model \sep  SARS-CoV-2 \sep Uncertainty \sep Sensitivity \sep Mathematical analysis \sep Monte Carlo-least squares.
		\end{keyword}
		
	\end{frontmatter}
	
	\section{Introduction}
	\label{S:1}
	
	The recent COVID-19 pandemic has caused a devastating burden on the global economy. Since there are currently no widely-available vaccines to bring down or reduce the infection levels on the susceptible human population, many governmental decision-makers worldwide have resorted to intensive non-pharmaceutical interventions such as wearing of face-masks, social distancing, cleaning of suspected infected surfaces, avoiding crowded places, the use of hand sanitizers. These non-pharmaceutical interventions have significantly helped to reduce the risk of transmission of COVID-19.
	
	Mathematical and statistical modelling tools are important in providing key epidemiological parameters of infectious diseases such as infection or transmission rate, recovery rate, incubation period,  isolation and hospitalization rate, quarantine rate, disease-induced death rate, vaccination rate (with other factors depending on the model formulation)\citep{chowell2009mathematical}. Using mathematical models, parametrised to confirmed reported cases of infection, helps estimate
	the basic reproduction number, $\mathcal{R}_{0}$
	which is a crucial epidemiological parameter that determines whether the infection persists in the  population or dies out \citep{li2020basic, dietz1993estimation, ma2020estimating, roberts2007model,chowell2004basic}.
	
	Nonlinear mathematical models have significantly contributed to the understanding of transmission dynamics of infectious diseases, see, e.g., \citep{hethcote2000mathematics, chowell2016mathematical, brauer2008lecture, khan2015estimating}, and the recent COVID-19 pandemic is of no exception \citep{blyuss2021effects, giordano2020modelling, ali2020role, asamoah2020global, mushayabasa2020role, ndairou2020mathematical, gevertz2020novel, carcione2020simulation}. \cite{qianying2020conceptual} have proposed and studied a data-driven SEIR type epidemic for the recent COVID-19 outbreak in Wuhan which captures the effects of
	governmental actions and individuals' behaviour.
	This literature is growing rapidly;
	\cite{abou2020compartmental} has reviewed
	the fundamentals in SIR/SEIR    modelling of the recent COVID-19 outbreak; here we give a brief overview
	of literature relevant to our work.
	
	\cite{buonomo2020effects} describes a  susceptible-\-infected-\-recovered-\-infected compartmental model to investigate the effects of information-dependent vaccination behavior on COVID-19 infections.
	A simple SEIR COVID-19 epidemic model with nonlinear incidence rates that capture governmental control has been designed by \cite{rohith2020dynamics} to examine the dynamics of the infectious disease in India.
	\cite{pang2020transmission} parametrise a nonlinear SEIHR model to estimate the value and sensitivity of $\mathcal{R}_0$ using data from Wuhan from December 31st, 2019.
	A classic SEIR epidemic is used to study the spreading dynamics of the 2019 coronavirus disease in Indonesia \citep{annas2020stability} using vaccination and isolation as model parameters. They constructed a Lyapunov function to conduct global stability of the disease-free equilibrium point. A data-driven epidemiological model that examines the effect of delay in the diagnosis of the deadly COVID-19 disease is formulated and studied by \cite{rong2020effect}, who estimate parameters and performed a out global sensitivity analysis of their model parameters on $\mathcal{R}_{0}$.
	
	A nonlinear SEIQR COVID-19 epidemic model is introduced by \cite{zeb2020mathematical} who present a local and global stability analysis for their model.  The spread of Covid19 in China due to undetected infections in is examined by \cite{ivorra2020mathematical}.
	\cite{chen2020introduction} propose a model based on dividing the total population  into five non-overlapping classes:  susceptible, exposed, infected (symptomatic infection), asymptomatic infected, and recovered. \cite{sardar2020assessment}, investigate the effects of lockdown using an SEIR model. Using reported cases of this highly infectious disease in some cities and the whole of India, they performed a global sensitivity analysis on the computed $\mathcal{R}_{0}$.
	
	The exposed and infectious epidemiological classes used in formulating infectious diseases models mentioned above have been left as abstract concepts. In reality, especially regarding SARS-CoV-2, it is hard to distinguish between individuals exposed to or infected with SARS-CoV-2,
	due to the presence of asymptomatic carriers. In this present study, we introduce two epidemiological classes, which are: (1) an identified group of exposed individuals suspected to be carriers of SARS-CoV-2 (denoted by $Q$); and, (2) individuals who have been clinically confirmed-positive for SARS-CoV-2 (denoted by $P$). Those identified as suspected exposed individuals are denoted by $Q$ because they are quarantined as required by the COVID-19 protocols in Ghana. Likewise, confirmed-positives ($P$) are treated as infectious individuals who have clinically tested positive for SARS-CoV-2. Introducing these distinctions in the epidemiological classes for SARS-CoV-2 is vital for gaining an understanding of its transmission dynamics within the Ghanaian population.  Using published data from March 13 to August 10, 2020 \citep{ritchie2020coronavirus}, we have parametrised our model using a Monte Carlo-least squares technique together with information derived from literature.
	
	The purpose of this research is to investigate the transmission dynamics of SARS-CoV-2 in Ghana using these more specific epidemiological classes to estimate the basic reproduction number, $\mathcal{R}_{0}$.  We have used Latin-Hypercube Sampling-Partial Rank Correlation Coefficient (LHS-PRCC) technique to quantify the uncertainty in $\mathcal{R}_{0}$ as well as to identify those parameters to which $\mathcal{R}_{0}$ is most sensitive. We have organised the subsequent sections of the paper as follows: in Section \ref{S:2} we present a detailed formulation of an epidemiological model of SARS-CoV-2 transmission in Ghana, together with corresponding mathematical analysis of the positivity and boundedness of solutions, a derivation of the basic reproduction number, and global stability analysis of the disease-free equilibrium, which are given in Section \ref{S:3}. Section \ref{S:4} is dedicated to parameter estimation and numerical simulation. The uncertainty and sensitivity analysis of $\mathcal{R}_{0}$ and its implications are presented in Section \ref{S:5}, together with some simulations predicting possible future dynamics of the epidemic. Finally, we give a brief discussion and conclusion of the study in Section \ref{S:6}.
	
	
	\section{Formulation of the model}
	\label{S:2}
	
	Compartmental models are useful means of qualitatively understanding the dynamics of disease transmissions within a population \citep{martcheva2015introduction, chowell2009mathematical}. In formulating our compartmental model to gain insight into COVID-19 transmission dynamics, the total human population is divided into nine distinct epidemiological classes which are summarised in Table \ref{Ts2a:1}. The numbers of individuals in each category is treated as a continuous variable, the classes being: susceptible, $S(t)$, exposed, $E(t)$, infectious, $I(t)$, quarantined suspects $Q(t)$, confirmed-positive $P(t)$, hospitalised in the ordinary ward $H(t)$, hospitalised in the intensive care unit $C(t)$, self-isolation $F(t)$ and recovered, $R(t)$.
	The total number of individuals in the population is thus given by
	\begin{equation}
		\label{Eqs2:1}
		N(t) = S(t) + E(t) + I(t) + Q(t) + P(t) + H(t) + C(t) + F(t) + R(t) .
	\end{equation}
	
	\begin{table}[H] 
		\hspace*{0.05in}
		\centering
		\begin{minipage}{0.9\textwidth}
			\fontsize{8}{10}\selectfont
			\caption{Description of the variables of the CoVCom9 model.}
			\label{Ts2a:1}
			\centering
			\begin{tabular}{ll}
				\hline \hline
				\multicolumn{1}{l}{Variable}
				& \multicolumn{1}{l}{Description}\\ 	\hline
				$ N $	& Total population\\
				$ S $	& Susceptible individuals\\
				$ E $	& Exposed individuals\\
				$ I $	& Infectious individuals\\
				$ Q $	& Quarantine suspected individuals\\
				$ P $	& Comfirmed-positive individuals\\
				$ H $	& Hospitalised at ordinary ward individuals\\
				$ C $	& Hospitalised at intensive care individuals\\
				$ F $	& Self-isolation individuals\\
				$ R $	& Recovered individuals\\ \hline
			\end{tabular}
		\end{minipage}
	\end{table}
	
	\begin{figure}[H] 
		\hspace*{-0.1in}
		{\includegraphics[scale=0.9]{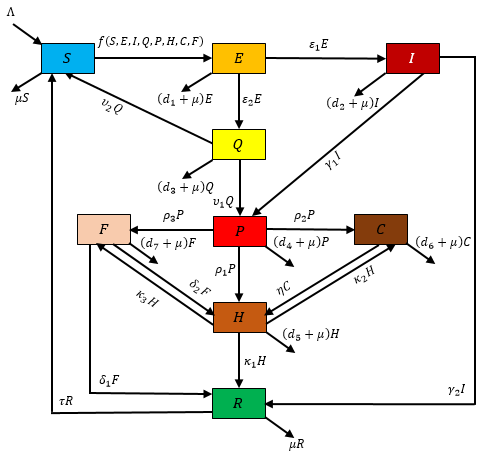}}
		\caption{\textit{Transmission diagram for the model of COVID-19 involving ten compartments. See Tables \ref{Ts2a:1} and \ref{Ts2a:2} for explanations of the parameters and variables used in the model, respectively}.}
		\label{Fs2a:1}
	\end{figure}
	
	Figure \ref{Fs2a:1} summarises the dynamic processes by which individuals pass from one class to another.
	The susceptible class ($S$) represents individuals not exposed to the SARS-CoV-2 virus, and the exposed class ($E$) represents individuals that have recently been exposed to the SARS-CoV-2 virus so are still in the incubation period and can infect others (that is, asymptomatic individuals). An individual in an exposed class can infect another person but with a probability lower than an individual in the infectious class ($I$).
	This rate of infection is given by the nonlinear function $f$ which depends on the parameters $\varphi,\alpha_1,\alpha_2,\alpha_3,\beta_1,\beta_2,\beta_3$.
	Individuals in an infectious class show clear symptoms and have high infectivity. These individuals have not yet been clinically
	confirmed-positive, and thus can spread the disease to the susceptibles.  
	Individuals in class $Q$ are quarantined, that is, individuals identified to have had contact with an exposed individual and so  might be carrying the SARS-CoV-2 virus  (but this has not yet been confirmed), this class also includes individuals not infected with SARS-CoV-2 but are quarantined as a result of enforcement of COVID-19 protocols. These individuals may either enter the susceptible class if test is confirmed negative or to the confirmed-positive class if confirmed to be infected.

	Individuals in the confirmed-positive class $P$ are carriers of the SARS-CoV-2 virus who have had clinical confirmation of this status. These individuals may either enter the intensive care hospitalised class, or be admitted to the ordinary hospitalised class or enter the self-isolated class after this period. The rates of the these processes are governed by the parameters $\gamma_1,v_1,\rho_1,\rho_2,\rho_3$.  
	The individuals in the ordinary Hospitalised class shows some level of sickness due to infection that need to be cared for at the ordinary ward. Though there is chance of entering into recovery class, these individuals' conditions may deteriorate causing them to enter the intensive care hospitalised class. 
	Individuals move between these categories with rates determined by $\kappa_2,\kappa_3,\delta_2,\eta$. These individuals can still infect other individuals who become exposed through close contact. 
	Individuals in intensive care ($C$) can still infect other individuals and have a high risk of dying (rates $d_j$) although improved care conditions may allow transfer to the ordinary ward ($H$, at rate $\eta$).

	Individuals in the self-isolated class ($F$) are on medication at home and can still infect other individuals. These individuals ($F$) may either enter the recovered class ($R$, at rate $\delta_1$) or enter the ordinary hospitalised class (rate $\delta_2$). Individuals who have recovered from SARS-CoV-2 virus enter into the recovered class ($R$) but can be re-infected since there is no life-long immunity, hence there is a flux from $R$ to $S$ with rate parameter $\tau$. We assume that individuals in all the compartments can die of COVID-19 (rates $d_j$) in addition to natural death (rate $\mu$) with the exception of the susceptible compartment with only natural death.  A summary of all the parameter definitions is given in Table \ref{Ts2a:2}.

	\begin{table}[H] 
		\centering
		\begin{minipage}{0.98\textwidth}
			\fontsize{8}{10}\selectfont
			\caption{Description of the CoVCom9 model parameters.}
			\label{Ts2a:2}
			\centering
			\begin{tabular}{p{9mm} p{99mm}} \hline \hline
				\multicolumn{1}{l}{Parameters} & \multicolumn{1}{l}{Description}\\ \hline
				$\Lambda$	& Recruitment rate \\
				$\mu$	    & Natural death rate\\
				$\varphi$   & Transmission rate of infectious individuals ($I$) \\
				$\alpha_1$  & Probability of transmission of exposed individuals ($E$)\\
				$\alpha_2$ & Probability of transmission of quarantine suspected infectious individuals ($Q$)\\
				$\alpha_3$ & Probability of transmission of confirmed-positive infectious individuals ($P$)\\
				$\beta_1$  & Probability of transmission of hospitalised at ordinary ward individuals ($H$)\\
				$\beta_2 $ & Probability of transmission of hospitalised at intensive care individuals ($C$)\\
				$\beta_3$  & Probability of transmission of self-isolation at home individuals ($F$)\\
				$\epsilon_1$ & Progression rate of exposed individuals to infectious class per day\\
				$\epsilon_2$ & Progression rate of exposed individuals to quarantine suspected per day\\
				$\gamma_1$  & Progression rate of infectious individuals to confirmed-positive per day\\
				$\gamma_2$  & Recovery rate of infectious individuals per day\\
				$\upsilon_1$   & Progression rate of quarantined ($Q$) to confirmed cases ($P$) per day\\
				$\upsilon_2$   & Progression rate of quarantined cases to susceptible cases per day\\\
				$\rho_{1}$	& Progression rate of confirmed-positive infectives to hospital class per day\\
				$\rho_{2}$  & Progression rate of confirmed-positive to intensive care class per day\\
				$\rho_{3} $	    & Progression rate of confirmed-positive ($P$) to self-isolation 
				\\ & \quad at home ($F$) class per day \\
				$\kappa_{1}$	& Recovery rate of hospitalised ($H$) individual per day\\
				$\kappa_{2}$	& Progression rate of hospitalised (ordinary, $H$) to intensive care ($C$) per day\\
				$\kappa_{3}$    & Progression rate of hospitalised (ordinary ward, $H$) to\\ & \quad self-isolation at home ($F$) class per day\\
				$ \delta_{1} $	& Recovery rate of self-isolation at home individual per day\\
				$ \delta_{2} $	& Progression rate of self-isolation at home 
				\\ & \quad to hospitalised at ordinary ward ($H$) class per day\\
				$ \eta $	& Progression rate of intensive care to ordinary ward class per day\\
				$ \tau $	& Progression rate of recovery individuals to susceptible class per day\\
				$ d_{1} $	& Disease-induced death rate of exposed individuals per day\\
				$ d_{2} $	& Disease-induced death rate of infectious individuals per day\\
				$ d_{3} $	& Disease-induced death rate of quarantine suspected infectives per day\\
				$ d_{4} $	& Disease-induced death rate of confirmed-positive individuals per day\\
				$ \sigma_{5} $	& Disease-induced death rate of intensive care individuals per day\\
				$ \delta_{6} $	& Disease-induced death rate of self-isolated ($F$) cases per day\\
				$ d_{7} $	& Disease-induced death rate of hospitalised individuals per day\\
				\hline \hline
			\end{tabular}
		\end{minipage}
	\end{table}
	
	The standard form of incidence which is formulated from the basic principles that effective transmission rates are independent of the population size $N$ for human diseases is used in this study \citep{martcheva2015introduction,hethcote2000mathematics}. This principle has been shown in many studies to be a plausible assumption \citep{hethcote2000mathematics}. If $\alpha$ is the average number of sufficient contacts for transmission of an individual per unit time, then $ \alpha I/N $ is the average number of contacts with infectives per unit time of one susceptible, and $ (\alpha I/N) S $ is the incidence. That is, the number of new cases per unit time at time $ t $ due to susceptibles $S(t)$ becoming infected \citep{hethcote2000mathematics}. We use $\varphi$ to denote the effective transmission rate from an infectious individual while $ \alpha_{1} $, $ \alpha_{2} $, $ \alpha_{3} $, $ \beta_{1} $, $ \beta_{2} $ and $ \beta_{3} $ denote the transmission probabilities, of exposed individuals, quarantine suspected exposed individuals, confirmed-positive individuals, ordinary hospitalised individuals, intensive care hospitalised individuals, and self-isolated individual, respectively. All these probabilities lie between zero and one. The incidence is therefore given by
	\begin{equation}  \fontsize{10}{11}\selectfont \label{Eqs2:2}
		f(S,E,I,Q,P,H,C,F) = \varphi \bigg( \frac{\alpha_{1}E+I+\alpha_{2}Q+\alpha_{3}P+\beta_{1}H+\beta_{2}C+\beta_{3}F}{N} \bigg) S .
	\end{equation}
	
	Our COVID-19 model (CoVCom9) is obtained by `translating' the compartmental model summarised in Figure \ref{Fs2a:1} into nine coupled ordinary differential equations
	\begin{subequations}\allowdisplaybreaks	\label{Eqs2:3}
		\begin{align}
			\dfrac{dS}{dt} & = \Lambda + \upsilon_{2} U + \tau R - f(S,E,I,Q,P,H,C,F) - \mu S,
			\label{Eqs2:3a}\\
			\dfrac{dE}{dt} & = f(S,E,I,Q,P,H,C,F)  - (\epsilon_{1} + \epsilon_{2} + \mu + d_{1}) E,
			\label{Eqs2:3b}\\
			\dfrac{dI}{dt} & = \epsilon_{1}E - (\gamma_{1} + \gamma_{2} + \mu + d_{2})I,
			\label{Eqs2:3c}\\
			\dfrac{dQ}{dt} & = \epsilon_{2}E - (\upsilon_{1} + \upsilon_{2} +  \mu + d_{3})Q,
			\label{Eqs2:3d}\\
			\dfrac{dP}{dt} & = \gamma_{1}I + \upsilon_{1}Q - (\rho_{1} + \rho_{2} + \rho_{3} + \mu + d_{4})P,
			\label{Eqs2:3e}\\
			\dfrac{dH}{dt} & = \rho_{1} P + \eta C + \delta_{2}Q - (\kappa_{1} + \kappa_{2} + \kappa_{3}  + \mu + d_{5})H,
			\label{Eqs2:3f}\\
			\dfrac{dC}{dt} & = \rho_{2} P + \kappa_{2} H - (\eta + \mu + d_{6})C,
			\label{Eqs2:3g}\\
			\dfrac{dF}{dt} & = \rho_{3} P + \kappa_{3} H - (\delta_{1} + \delta_{2}  + \mu + d_{7})F,
			\label{Eqs2:3h}\\
			\dfrac{dR}{dt} & = \gamma_{2} I + \kappa_{1} H + \delta_{1}F - (\tau + \mu) R,
		\end{align}
	\end{subequations}
	with $t>0$. These are solved subject to the initial conditions
	\begin{equation}
		\label{Eqs2:4}
		\begin{aligned}
			S(0)=S_{0}\ge 0, \quad E(0)=E_{0}\ge 0, \quad I(0)=I_{0}\ge 0, \\
			Q(0)=Q_{0}\ge 0, \quad P(0)=P_{0}\ge 0, \quad H(0)=H_{0}\ge 0, \\
			C(0)=C_{0}\ge 0, \quad F(0)=F_{0}\ge 0, \quad R(0)=R_{0}\ge 0.
		\end{aligned}
	\end{equation}
	In this paper, we will use the acronym CoVCom9 to indicate the nine compartments of the model of SARS-CoV-2 transmission pattern in Ghana given by Equation (\ref{Eqs2:3}).
	The epidemiologically feasible region of interest of the model (\ref{Eqs2:3}) 
	is the domain defined by 
	\begin{align} 
		\label{Eqs2:5}
		\Omega = \biggl\{(S(t), E(t), I(t), Q(t), P(t), H(t), C(t), F(t), R(t))\in \mathbb{R}_{+}^{9}:
		\notag \\
		S+E+I+Q+P+H+C+F+R \le \frac{\Lambda}{\mu} \biggr\} . 
	\end{align}
	
	In the following sections we present a mathematical analysis of the model 
	with respect to positivity and boundedness of the feasible region, 
	$\Omega$, as well as various stability results and the epidemiological 
	threshold of interest. In the subsequent sections, we discuss a theorem 
	demonstrating that solutions of Equation (\ref{Eqs2:3}) with initial 
	condidtions (\ref{Eqs2:4}) in $\Omega$ remain in $\Omega$.

	
	\section{Mathematical analysis of CoVCom9 model}
	\label{S:3}
	
	\subsection{Positivity, boundedness and invariant region}
	\label{S:3a}
	
	The CoVCom9 model (\ref{Eqs2:3}) depicts COVID-19 transmission dynamics in the human population, so it is vital to show that variables in (\ref{Eqs2:3}) remain  nonnegative and bounded for all time $ t \ge 0 $ and do not leave the
	epidemiologically feasible region of interest, $\Omega$.
	
	\begin{lemma}[Positivity and Boundedness]
		\label{Lm2:1}
		For any given nonnegative initial conditions in Eq. (\ref{Eqs2:4}), the  CoVCom9 model (\ref{Eqs2:3}) has a nonnegative solution
		$\{S(t), E(t), I(t), Q(t), P(t), H(t), C(t), F(t), R(t)\}$ of the system (\ref{Eqs2:3}) for all time $ t \ge 0 $ whenever the parameters are non-negative. Moreover
		\begin{equation}
			\lim_{t\rightarrow \infty} \sup N(t) \le \frac{\Lambda}{\mu} .
		\end{equation}
	\end{lemma}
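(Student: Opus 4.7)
The plan is to split the lemma into its two assertions (nonnegativity of each compartment, and the asymptotic bound on $N$) and attack them independently, using standard tools for cooperative-type ODE systems arising in epidemiology. Before starting, I would note that local existence and uniqueness on some maximal interval $[0,T_{\max})$ follows immediately because the right-hand side of \eqref{Eqs2:3} is locally Lipschitz (the only nonlinearity is the bilinear force of infection $f$ divided by $N$, and I will show below that $N$ stays bounded away from $0$ on any compact time interval as long as $\Lambda>0$ and $S(0)\ge 0$).

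For \textbf{positivity}, my plan is to observe that each equation in \eqref{Eqs2:3} has the structure
\begin{equation*}
\frac{dX_i}{dt} \;=\; g_i(X_1,\dots,X_9) \;-\; a_i(t)\,X_i,
\end{equation*}
where $g_i \ge 0$ whenever all $X_j\ge 0$, and $a_i(t)$ is a nonnegative, locally bounded function of the remaining state variables and parameters. For instance, $dE/dt = f(S,E,I,Q,P,H,C,F) - (\epsilon_1+\epsilon_2+\mu+d_1)E$ with $g_E$ a nonnegative multilinear form in the other variables; similarly $dI/dt$, $dQ/dt$, $dP/dt$, $dH/dt$, $dC/dt$, $dF/dt$, $dR/dt$ all have a pure nonnegative ``source'' term plus a linear decay in the compartment itself. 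For $S$ the only tricky term is $-f(\cdots)$, which contains $S$ as a factor, so $dS/dt = (\Lambda+\upsilon_2 Q+\tau R) - [\mu + \varphi(\alpha_1 E+I+\alpha_2 Q+\alpha_3 P+\beta_1 H+\beta_2 C+\beta_3 F)/N]S$ has the required form. To make this rigorous I would set $\tau^\star = \sup\{t\in[0,T_{\max}) : X_j(s)\ge 0 \text{ for all } j \text{ and all } s\in[0,t]\}$ and argue by contradiction: if $\tau^\star<T_{\max}$, some component $X_i$ hits $0$ first at $t=\tau^\star$ while the rest remain $\ge 0$; then at $t=\tau^\star$, $dX_i/dt = g_i\ge 0$, contradicting the assumption that $X_i$ becomes negative just after $\tau^\star$. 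Equivalently, integrating the linear inequality gives the Duhamel-type representation $X_i(t) = X_i(0)\exp(-\!\int_0^t a_i) + \int_0^t g_i(s)\exp(-\!\int_s^t a_i)\,ds \ge 0$, making nonnegativity manifest.

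For \textbf{boundedness}, the plan is the usual trick of summing all nine equations in \eqref{Eqs2:3}. Telescoping cancellations kill every interior transition term and leave
\begin{equation*}
\frac{dN}{dt} \;=\; \Lambda \;-\; \mu N \;-\; \bigl(d_1 E + d_2 I + d_3 Q + d_4 P + d_5 H + d_6 C + d_7 F\bigr),
\end{equation*}
and since positivity has already been established, the bracketed disease-death contribution is nonnegative, giving the differential inequality $dN/dt \le \Lambda - \mu N$. A direct application of the standard comparison/Gr\"onwall argument then yields $N(t)\le \Lambda/\mu + (N(0)-\Lambda/\mu)e^{-\mu t}$, so $\limsup_{t\to\infty} N(t)\le \Lambda/\mu$; as a by-product $N(t)$ is bounded on $[0,T_{\max})$, which (together with positivity) precludes finite-time blow-up and extends the solution globally, $T_{\max}=\infty$. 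This also confirms that the region $\Omega$ defined in \eqref{Eqs2:5} is positively invariant.

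The main obstacle, such as it is, will not be the boundedness part (which is a one-line summation) but rather making the positivity argument fully airtight in the presence of the denominator $N$ in the force of infection. I would handle this by first noting that $S(0)+\Lambda t \le $ (trivial lower bound argument) keeps $N$ strictly positive on any compact interval before any compartment can vanish, so the coefficient $a_S(t)$ in the $S$-equation is well defined and locally bounded; with that in place, the quasi-positivity/invariance-of-the-nonnegative-cone argument goes through uniformly for all nine components.
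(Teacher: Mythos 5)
Your proposal follows essentially the same route as the paper: an integrating-factor/comparison argument for nonnegativity of each compartment, followed by summing the nine equations to obtain $dN/dt \le \Lambda - \mu N$ and hence $\limsup_{t\to\infty} N(t) \le \Lambda/\mu$. The only difference is one of rigour: the paper writes out the integrating-factor bound only for $S$ and asserts the remaining components follow ``similarly'', whereas you make explicit the quasi-positivity structure, the first-crossing-time argument needed because the sign of each source term depends on the other components, the well-definedness of the $S/N$ term, and the extension to global existence --- all points the paper leaves implicit.
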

	
	\begin{proof}
		Considering the first equation of the CoVCom9 model (\ref{Eqs2:3}), one can clearly see that
		\begin{equation}
			\label{Eqs3a:1}
			\dfrac{dS}{dt} \ge - (\lambda + \mu) S,
		\end{equation}
		\noindent where $$ \lambda = \varphi \bigg(\frac{\alpha_{1} E + I + \alpha_{3} Q + \alpha_{4} P + \beta_{1} H + \beta_{2} C + \beta_{3} F}{N}\bigg) $$
		\noindent Next, integrating Eq. (\ref{Eqs3a:1}), we find
		\begin{equation}
			\label{Eqs3a:2}
			S(t) \ge S_{0}\exp\biggl[-\int_{o}^{t}(\lambda(\zeta)+\mu)d\zeta\biggr]
		\end{equation}
		\noindent Therefore $S(t) \ge 0 $ for all $t \ge 0$.
		
		Following a similar argument, it can be shown that the rest of the model variables have nonnegative solutions for all time $t \ge 0 $. That is, $ E(t) \ge 0$, $I(t) \ge 0$, $Q(t) \ge 0$, $P(t) \ge 0$,	$H(t) \ge 0$, $C(t) \ge 0$, $F(t) \ge 0$, $R(t) \ge 0 $, $ \forall t \ge 0 $.
		
		Furthermore, we prove that the solutions are bounded. Adding the right-hand side of the CoVCom9 model (\ref{Eqs2:3}) yields
		\begin{equation}
			\begin{aligned}
				\label{Eqs3a:3}
				\dfrac{dN}{dt} & = \Lambda - \mu N - d E - d_{1} I - d U - d_{2} P - d_{3} H - d_{4} C - d_{5} F \; \le \; \Lambda - \mu N,
			\end{aligned}
		\end{equation}
		\noindent Since $dN/dt  \le \Lambda - \mu N$, it follows that
		\begin{equation}
			\label{Eqs3a:4}
			\lim_{t\rightarrow \infty} \sup N(t) \le \frac{\Lambda}{\mu} .
		\end{equation}
		
	\end{proof}
	
	\begin{lemma}[Positively Invariant Region]
		\label{Lm2:2}
		The region defined by the closed set, $\Omega$ in Eq. (\ref{Eqs2:5}) is positively invariant for the model (\ref{Eqs2:3}) with nonnegative initial conditions (\ref{Eqs2:4}) whenever the parameters are nonnegative.
	\end{lemma}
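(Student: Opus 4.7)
The plan is to leverage Lemma \ref{Lm2:1} as the starting point: it already gives nonnegativity of every component for all $t\ge 0$ and the asymptotic bound $\limsup_{t\to\infty} N(t)\le \Lambda/\mu$. What remains is to upgrade the asymptotic bound to a uniform bound on $[0,\infty)$ whenever the initial total population satisfies $N(0)\le\Lambda/\mu$, so that the trajectory never leaves the closed set $\Omega$.

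First I would recall, exactly as in the proof of Lemma \ref{Lm2:1}, that summing equations (\ref{Eqs2:3a})--(\ref{Eqs2:3h}) together with the equation for $dR/dt$ yields the differential inequality
\begin{equation*}
\frac{dN}{dt}=\Lambda-\mu N-\bigl(d_1 E+d_2 I+d_3 Q+d_4 P+d_5 H+d_6 C+d_7 F\bigr)\le \Lambda-\mu N,
\end{equation*}
where I have used the nonnegativity of each compartment and of the disease-induced death rates established in Lemma \ref{Lm2:1}.

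Next I would apply the standard integrating-factor/comparison argument to this scalar linear inequality. Multiplying by $e^{\mu t}$ and integrating from $0$ to $t$ gives
\begin{equation*}
N(t)\le N(0)e^{-\mu t}+\frac{\Lambda}{\mu}\bigl(1-e^{-\mu t}\bigr).
\end{equation*}
If the initial data lie in $\Omega$, so that $N(0)\le \Lambda/\mu$, then the right-hand side is a convex combination of $N(0)$ and $\Lambda/\mu$, both bounded above by $\Lambda/\mu$, and hence $N(t)\le \Lambda/\mu$ for every $t\ge 0$. Combined with the componentwise nonnegativity from Lemma \ref{Lm2:1}, this shows that the solution stays in $\Omega$, which is precisely positive invariance.

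Honestly, there is no serious obstacle here; the argument is a routine Gronwall-type comparison. The only point requiring a little care is to separate the two cases $N(0)\le\Lambda/\mu$ (giving the uniform bound used for positive invariance) and $N(0)>\Lambda/\mu$ (where one only recovers the asymptotic bound of Lemma \ref{Lm2:1}), and to be explicit that positive invariance is claimed only for initial data already inside $\Omega$, as stated in the lemma. I would close by remarking that, together with the standard existence-uniqueness theory for the locally Lipschitz vector field on the right-hand side of (\ref{Eqs2:3}), this invariance also guarantees that solutions exist globally in forward time within $\Omega$, so the model is well-posed on the epidemiologically relevant domain.
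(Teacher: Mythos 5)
Your proof is correct and follows essentially the same route as the paper: sum the equations of (\ref{Eqs2:3}) to obtain $dN/dt \le \Lambda - \mu N$, then integrate with an integrating factor and combine with the componentwise nonnegativity from Lemma \ref{Lm2:1}. Your version is in fact slightly sharper in its execution --- the paper records the looser bound $N(t) \le \Lambda/\mu + N(0)e^{-\mu t}$ and only draws the conclusion in the limit $t\to\infty$, whereas your convex-combination step $N(t)\le N(0)e^{-\mu t}+\tfrac{\Lambda}{\mu}\bigl(1-e^{-\mu t}\bigr)\le \Lambda/\mu$ whenever $N(0)\le\Lambda/\mu$ is what actually establishes invariance for every finite $t$.
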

	
	\begin{proof}
		As in Lemma \ref{Lm2:1}, it follows from the summation of all the equations of the CoVCom9 model (\ref{Eqs2:3}) that
		\begin{equation}
			\begin{aligned}
				\label{Eqs3a:5}
				\dfrac{dN}{dt} & \le \Lambda - \mu N.
			\end{aligned}
		\end{equation}
		\noindent Using the initial condition $N(0)>0$ and an integrating factor, we have
		\begin{equation}
			\begin{aligned}
				\label{Eqs3a:6}
				0 \le N(t) & \le \frac{\Lambda}{\mu} + N(0)\exp(-\mu t),
			\end{aligned}
		\end{equation}
		\noindent where $N(0)$ is the initial value of the total population. Thus $ N(t) \le \Lambda/\mu$, as $ t \rightarrow \infty $. Therefore all feasible solutions of system (\ref{Eqs2:3}) enter the region $\Omega$ defined by (\ref{Eqs2:5}), 
		which is a positively invariant set of the system (\ref{Eqs2:3}). 
		This implies that all solutions in $\Omega$
		remain in $\Omega$  $\forall t \ge 0$. It is therefore sufficient 
		to study the dynamics of CoVCom9 model system (\ref{Eqs2:3}) in $\Omega$.
		
	\end{proof}
	
	\subsection{The basic reproduction number and existence of equilibria}
	\label{S:3b}
	
	The CoVCom9 model has a disease-free equilibrium point given by
	\begin{equation}
		\mathcal{E}_0 = (S_{0}, 0, 0, 0, 0, 0, 0, 0, 0) \in \Omega , 
		\qquad S_{0} = \frac{\Lambda}{\mu} .
	\end{equation}
	The basic reproduction number is defined as the number of secondary infections produced by a single infectious individual during the entire infectious period \citep{vandendriessche2002reproduction}. In this study, the reproduction number defined as the number of secondary SARS-CoV-2 infections generated by a single active SARS-CoV-2 individual during the entire infectious period. Mathematically, the basic reproduction number $\mathcal{R}_0$ is the dominant eigenvalue of the next generation matrix  \citep{Diekmann2010construction,vandendriessche2002reproduction}.
	We apply the method formulation in \citep{vandendriessche2002reproduction} to obtain an expression of $\mathcal{R}_{0}$ for the proposed CoVCom9 (\ref{Eqs2:3}). Let $ {\bf x} = \big(E, I, Q, P, H, C, F\big)^{T}$, then the system (\ref{Eqs2:3}) can be written in the form
	\begin{equation}
		\frac{d \mathbf{x}}{dt} = \mathcal{F}(\mathbf{x}) - \mathcal{V}(\mathbf{x}),
	\end{equation}
	where
	\begin{equation} \label{Eqs3b:1}
		\mathcal{F}(\mathbf{x})=\begin{bmatrix}
			\dfrac{\varphi (\alpha_{1} E + I + \alpha_{2} Q + \alpha_{3} P + \beta_{1} H + \beta_{2} C + \beta_{3} F)S}{N}\\ 0 \\ 0 \\ 0 \\ 0 \\ 0 \\ 0
		\end{bmatrix},
	\end{equation}
	\begin{equation}
		\label{Eqs3b:2}
		\mathcal{V}(\mathbf{x})=\begin{bmatrix}
			\pi_{E} E\\
			-\epsilon_{1} E + \pi_{I} I\\
			-\epsilon_{2} E + \pi_{Q} Q\\
			-\gamma_{1} I-\nu_{1} Q + \pi_{P} P\\
			-\rho_{1} P -\eta C - \delta_{2} F + \pi_{H} H\\
			-\rho_{2} P -\kappa_{2} H + \pi_{C} C\\
			-\rho_{3} P -\kappa_{3} H + \pi_{F} F
		\end{bmatrix}.
	\end{equation}
	and
	\begin{equation}
		\label{Eqs3b:3}
		\begin{aligned}
			\pi_{E} = & \epsilon_{1} + \epsilon_{2} + \mu + d_{1}; \qquad
			\pi_{I}  = \gamma_{1} + \gamma_{2} + \mu + d_{2}; \\
			\pi_{Q} = & \upsilon_{1} + \upsilon_{2} + \mu + d_{3}; \qquad
			\pi_{P}  = \rho_{1} + \rho_{2} + \rho_{3} + \mu + d_{4}; \\
			\pi_{H}  = & \kappa_{1} + \kappa_{2} + \kappa_{3}  + \mu + d_{5}; \qquad
			\pi_{C}  = \eta + \mu + d_{6}; \\
			\pi_{F} = & \delta_{1} + \delta_{2}  + \mu + d_{7} .  	
		\end{aligned}
	\end{equation}
	The Jacobians of $\mathcal{F}(\mathbf{x})$ and $\mathcal{V}(\mathbf{x})$ 
	evaluated at the disease free equilibrium $E_{0}$ are, respectively,
	\begin{equation}
		\label{Eqs3b:4}
		J_F =\begin{bmatrix}
			\varphi \alpha_{1} &\varphi &\varphi \alpha_{2} &\varphi \alpha_{3} &\varphi \beta_{1} &\varphi \beta_{2} &\varphi \beta_{3}\\
			0&0&0&0&0&0&0\\
			0&0&0&0&0&0&0\\
			0&0&0&0&0&0&0\\
			0&0&0&0&0&0&0\\
			0&0&0&0&0&0&0\\
			0&0&0&0&0&0&0\\
		\end{bmatrix},
	\end{equation}
	\begin{equation}
		\label{Eqs3b:5}
		J_V=\begin{bmatrix}
			\pi_{E}& 0&0&0&0&0&0\\
			-\epsilon_{1}& \pi_{I}&0&0&0&0&0\\
			-\epsilon_{2}& 0& \pi_{Q}&0 &0&0&0\\
			0& -\gamma_{1} & -\nu_{1} & \pi_{P}& 0&0&0\\
			0&0&0 &-\rho_{1} &\pi_{H}&-\eta& -\delta_{2}\\
			0 &0& 0& -\rho_{2} & -\kappa_{2} & \pi_{C}& 0\\
			0&0&0&-\rho_{3}&-\kappa_{3} & 0& \pi_{F}
		\end{bmatrix}.
	\end{equation}
	The basic reproduction number, $\mathcal{R}_{0}$ is given by
	the dominant eigenvalue of $J_F J_V^{-1}$
	\begin{align}
		\mathcal{R}_{0}  = & \;\varphi \biggl\{ \dfrac{\alpha_{1}}{\pi_{E}}
		+ \dfrac{\epsilon_{1}}{\pi_{E} \pi_{I}}
		+  \dfrac{\alpha_{2} \epsilon_{2}}{\pi_{E} \pi_Q}
		+ \dfrac{\alpha_{3}}{\pi_P} \biggl( \dfrac{\epsilon_{1} \gamma_{1} }{\pi_{E} \pi_{I}}
		+ \dfrac{\epsilon_{2} \upsilon_{1} }{\pi_{E} \pi_{Q}} \biggr)  
		\nonumber \\
		& + \dfrac{\beta_{1}}{\pi_{P}} \biggl(\dfrac{\rho_{1} \pi_{C} \pi_{F}  + \delta_{2} \rho_{3} \pi_{C} + \eta  \rho_{2} \pi_{F}}{\pi_{H} \pi_{C} \pi_{F} - \delta_{2} \kappa_{3} \pi_{C} - \eta \kappa_{2} \pi_{F}}\biggr) \biggl(\dfrac{\epsilon_{1}}{\pi_{E}}\dfrac{\gamma_{1}}{\pi_{I}}+ \dfrac{\epsilon_{2}}{\pi_{E}}\dfrac{\upsilon_{1}}{\pi_{Q}}\biggr)
		\nonumber \\
		& + \dfrac{\beta_2}{\pi_P} \biggl(\dfrac{\rho_{2} \pi_{H} \pi_{F} + \kappa_{2}  \rho_{1} \pi_{F} + \delta_{2} (\kappa_{2}\rho_{3} \!-\! \kappa_{3}\rho_{2})}{ \pi_{H} \pi_{C} \pi_{F} - \delta_{2} \kappa_{3} \pi_{C} - \eta \kappa_{2} \pi_{F}}\biggr) \biggl(
		\dfrac{\epsilon_{1} \gamma_{1}}{\pi_{E} \pi_{I}} + 
		\dfrac{\epsilon_{2} \upsilon_{1} }{\pi_{E}\pi_{Q}} \biggr)  
		\nonumber \\
		& +\dfrac{\beta_{3}}{\pi_P} \biggl(\dfrac{\rho_{3} \pi_{H} \pi_{C} + \kappa_{3}  \rho_{1} \pi_{C} + \eta (\kappa_{3}\rho_{2} \!-\! \kappa_{2}\rho_{3})}{\pi_{H} \pi_{C} \pi_{F} - \delta_{2} \kappa_{3} \pi_{C} - \eta \kappa_{2} \pi_{F}}\biggr) \biggl( \dfrac{\epsilon_{1}\gamma_{1}}{\pi_{E}\pi_{I}} + \dfrac{\epsilon_{2} \upsilon_{1}}{\pi_{E} \pi_{Q}}\biggr)\biggr\},
		\nonumber \\ & 
		\label{Eqs3b:6}
	\end{align}
	which can be written as
	\begin{align}
		\label{Eqs3b:7}
		\mathcal{R}_{0} = &\; \mathcal{R}_{0E} + \mathcal{R}_{0I} + \mathcal{R}_{0Q} + 
		\mathcal{R}_{0P} + \mathcal{R}_{0H} + \mathcal{R}_{0C} + \mathcal{R}_{0F} ,
	\end{align}
	where the effective reproduction number, $\mathcal{R}_{0}$ is made up of contributions from secondary infections from the exposed group $E$ ($ \mathcal{R}_{0E} $) generated by asymptomatic individuals; the infected (symptomatic) group $I$ ($\mathcal{R}_{0I} $); asymptomatic quarantine suspected individuals - class-$Q$ ($ \mathcal{R}_{0Q} $); confirmed positive individuals - class $P$ ($\mathcal{R}_{0P} $); hospitalised cases ($H$, $ \mathcal{R}_{0H} $); intensive care ($C$) cases, ($ \mathcal{R}_{0C} $); and those self-isolating at home ($F$, $ \mathcal{R}_{0F} $). Equation (\ref{Eqs3b:6}) implies that intervention strategies of SARS-CoV-2 infections should target those in classes $E$, $I$, $Q$, $P$, $H$, $C$, and $F$.
	
	According to Theorem 3.2 of \cite{vandendriessche2002reproduction},  the disease-free steady state $E_{0}$ is locally asymptotically stable if $\mathcal{R}_{0}<1$ and unstable if $\mathcal{R}_{0}>1$.   In the next section we provide stability results for the disease-free equilibrium state.
	
	\subsection{Stability of disease free equilibrium (DFE)}
	\label{S:3c}
	
	In this section, we prove global stability results for the CoVCom9 model (\ref{Eqs2:3}). The epidemiological implication of the local stability is that a small number of the infected individuals will not generate large outbreaks so in the long run, resulting in SARS-CoV-2 dying out provided $ \mathcal{R}_{0} < 1 $. The global stability result helps demonstrate that the disappearance of SARS-CoV-2 disease is independent of the size of the initial subpopulations in the model, provided $ \mathcal{R}_{0} < 1 $ \citep{ali2020role}.  The global stability of the disease-free equilibrium, $E_{0}$ is established using a candidate Lyapunov function.
	
	\begin{theorem}
		\label{Thm3:1}
		The disease-free equilibrium state, $E_{0}$ of the CoVCom9 model (\ref{Eqs2:3}) is globally asymptotically stable in $ \Omega $ if $ \mathcal{R}_{0} < 1 $ and unstable if $ \mathcal{R}_{0} > 1 $.
	\end{theorem}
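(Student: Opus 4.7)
The instability of $E_0$ when $\mathcal{R}_0 > 1$ requires no separate argument: it is precisely the content of Theorem 3.2 of \cite{vandendriessche2002reproduction}, which has already been invoked at the end of Section \ref{S:3b}. The substantive part of the theorem is therefore the global asymptotic stability of $E_0$ in $\Omega$ under $\mathcal{R}_0 < 1$, and for this I would follow the standard next-generation Lyapunov construction for compartmental epidemic models.

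The plan is to take a positive linear functional of the infected classes,
\begin{equation*}
\mathcal{L}(E,I,Q,P,H,C,F) \;=\; a_1 E + a_2 I + a_3 Q + a_4 P + a_5 H + a_6 C + a_7 F ,
\end{equation*}
with coefficients $\mathbf{a} = (a_1,\ldots,a_7)^T$ chosen so that $\mathbf{a}^T$ is a strictly positive left eigenvector of the next-generation matrix $J_F J_V^{-1}$ associated with its spectral radius $\mathcal{R}_0$ (equivalently, $\mathbf{a}^T J_F = \mathcal{R}_0\, \mathbf{a}^T J_V$). Differentiating $\mathcal{L}$ along (\ref{Eqs2:3b})--(\ref{Eqs2:3h}) and using the bound $S/N \leq 1$, valid throughout $\Omega$, the nonlinear incidence term is controlled by its value at $S=N$, and after grouping by compartment the estimate collapses to
\begin{equation*}
\dot{\mathcal{L}} \;\leq\; (\mathcal{R}_0 - 1)\bigl(c_E E + c_I I + c_Q Q + c_P P + c_H H + c_C C + c_F F\bigr),
\end{equation*}
with constants $c_\bullet > 0$ depending on the $\pi$-quantities and on $\mathbf{a}$. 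Hence $\dot{\mathcal{L}} \leq 0$ whenever $\mathcal{R}_0 \leq 1$, with equality only on the set where $E = I = Q = P = H = C = F = 0$.

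The main technical obstacle is the explicit determination of $\mathbf{a}$, since the $H$--$C$--$F$ block of $J_V$ is not triangular: the off-diagonal entries $-\eta$, $-\kappa_2$, $-\kappa_3$, $-\delta_2$ couple these three hospitalisation/isolation compartments, which is exactly what generates the compound denominator $\pi_H \pi_C \pi_F - \delta_2 \kappa_3 \pi_C - \eta \kappa_2 \pi_F$ in (\ref{Eqs3b:6}). I would solve the eigenvector equation compartment-by-compartment, starting from the coupled $H$--$C$--$F$ trio (where a small linear system must be inverted), propagating back through $P$, then through the parallel branches $I$ and $Q$, and finally through $E$. Positivity of every $a_i$ then follows from positivity of the $\pi$'s together with the assumption that the $H$--$C$--$F$ block determinant is positive on the biologically meaningful parameter range -- exactly the condition that lets the next-generation framework apply in the first place.

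Once $\dot{\mathcal{L}} \leq 0$ is in hand, global asymptotic stability is concluded via LaSalle's Invariance Principle. The largest invariant subset of $\{\dot{\mathcal{L}} = 0\} \cap \Omega$ reduces to $\{E_0\}$: imposing $E=I=Q=P=H=C=F=0$ in system (\ref{Eqs2:3}) leaves $dR/dt = -(\tau+\mu)R$, forcing $R(t) \to 0$, after which $dS/dt = \Lambda - \mu S$ drives $S(t) \to \Lambda/\mu = S_0$. Combining this with the positive invariance of $\Omega$ established in Lemma \ref{Lm2:2}, every trajectory starting in $\Omega$ converges to $E_0$, which is therefore globally asymptotically stable in $\Omega$ whenever $\mathcal{R}_0 < 1$. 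Together with the cited instability result for $\mathcal{R}_0 > 1$, this completes the proof.
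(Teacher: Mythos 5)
Your proposal is correct and follows essentially the same route as the paper: a linear Lyapunov function in the seven infected compartments with coefficients read off from the next-generation structure, the bound $S/N\le 1$ on $\Omega$, and LaSalle's Invariance Principle, with instability for $\mathcal{R}_0>1$ delegated to Theorem 3.2 of \cite{vandendriessche2002reproduction}. The only (immaterial) difference is the normalisation of the coefficients: the paper sets $\varPhi_1=1$ and forces the coefficients of $I,Q,P,H,C,F$ to vanish so that $\dot V\le\pi_E(\mathcal{R}_0-1)E$, whereas your left-eigenvector choice leaves $\dot{\mathcal{L}}\le(\mathcal{R}_0-1)\,\mathbf{a}^T J_V\mathbf{x}$ with $\mathbf{a}^T J_V$ proportional to the row of transmission terms; both require inverting the coupled $H$--$C$--$F$ block exactly as you describe, and your LaSalle step is if anything more carefully argued than the paper's.
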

	
	\begin{proof}
		We construct a candidate Lyapunov function (\ref{Eqs3c:1}) for the CoVCom9 model (\ref{Eqs2:3}) as
		\begin{equation}
			\label{Eqs3c:1}
			V(E,I,Q,P,H,C,F) = \varPhi_{1} E + \varPhi_{2} I + \varPhi_{3} Q + \varPhi_{4} P + \varPhi_{5} H + \varPhi_{6} C + \varPhi_{7} F,	\end{equation}
		where $ \varPhi_{i}, i=1,2,\cdots,7 $ are (as yet unknown) non-negative coefficients.
		Since all the variables are bounded below by zero, then so is $V$.
		Assuming that the variables are solutions of the model (\ref{Eqs2:3}),  the derivative of $V$ with respect to $t$ can be bounded by
		\begin{equation} \allowdisplaybreaks
			\begin{aligned}
				\label{Eqs3c:2}
				\dfrac{dV}{dt}  = & \; \varPhi_{1} \biggl(\varphi \big(\alpha_{1}E+I+\alpha_{2}Q+\alpha_{3}P+\beta_{1}H+\beta_{2}C+\beta_{3}F\big) \big(\frac{S}{N}\big)  - \pi_{E} E \biggr)\\
				& + \varPhi_{2} \biggl(\epsilon_{1}E - \pi_{I} I\biggr) + \varPhi_{3} \biggl(\epsilon_{2}E - \pi_{Q} Q\biggr) + \varPhi_{4} \biggl(\gamma_{1}I + \upsilon_{1}Q - \pi_{P} P\biggr) \\
				& + \varPhi_{5} \biggl(\rho_{1} P + \eta C + \delta_{2}F - \pi_{H} H\biggr) + \varPhi_{6} \biggl(\rho_{2} P + \kappa_{2} H - \pi_{C} C\biggr)  \\
				& + \varPhi_{7} \biggl(\rho_{3} P + \kappa_{3} H - \pi_{F} F\biggr)   \\
				\le & \; \biggl(\varPhi_{1} \varphi \alpha_{1} + \varPhi_{2} \epsilon_{1} + \varPhi_{3} \epsilon_{2} - \varPhi_{1} \pi_{E} \biggr) E +
				\biggl(\varPhi_{1} \varphi + \varPhi_{4} \gamma_{1} - \varPhi_{2} \pi_{I} \biggr) I  \\
				& \!+ \! \biggl(\! \varPhi_{1} \varphi \alpha_{3} \!+\! \varPhi_{4} \upsilon_{1} \!-\! \varPhi_{3} \pi_{Q} \!\biggr) Q  \!+\!
				\biggl(\!\varPhi_{1} \theta \alpha_{4} \!+\! \varPhi_{5} \rho_{1} \!+\! \varPhi_{6} \rho_{2} \!+\! \varPhi_{7} \rho_{3} \!-\! \varPhi_{4} \pi_{P} \!\biggr) P \! \\
				& \!+ \biggl(\varPhi_{1} \varphi \beta_{1} + \varPhi_{6} \kappa_{2} + \varPhi_{7} \kappa_{3} - \varPhi_{5} \pi_{H} \biggr) H +
				\biggl(\varPhi_{1} \varphi \beta_{2} + \varPhi_{5} \eta - \varPhi_{6} \pi_{C} \biggr) C \\
				& \!+ \biggl(\varPhi_{1} \varphi \beta_{3} + \varPhi_{5} \delta_{2} - \varPhi_{7} \pi_{F} \biggr) F, \qquad \qquad \text{since} \qquad S/N<1.\\
			\end{aligned}
		\end{equation}
		Requiring the bracketed coefficients of $E$, $I$, $U$, $P$, $H$, $C$, 
		and $Q$ to zero, we obtain expressions for the previously undetermined 
		parameters $\varPhi_i$, which are thus given by 
		\begin{equation}
			\begin{array}{c}
				\varPhi_{1} = 1, \qquad
				\varPhi_{2} = \dfrac{\varphi + \varPhi_{4}\gamma_{1}}{\pi_{I}}, \qquad
				\varPhi_{3} = \dfrac{\varphi \alpha_{2} + \varPhi_{4}\upsilon_{1}}{\pi_{Q}},
				\\[3ex]
				\varPhi_{4} = \dfrac{1}{\pi_{P}}\biggl[\varphi \alpha_{3}
				+ \dfrac{\varphi \beta_{2} \rho_{2}}{\pi_{C}}
				+ \dfrac{\varphi \beta_{3} \rho_{3}}{\pi_{Q}} + \biggl(\rho_{1}
				+ \dfrac{\eta \rho_{2}}{\pi_{C}}
				+ \dfrac{\delta_{2} \rho_{3}}{\pi_{F}}\biggr)\varPhi_{5}\biggr],
				\\[3ex]
				\varPhi_{5} = \varphi \biggl(\dfrac{\beta_{1}\pi_{C}\pi_{F} + \beta_{2}\kappa_{2}\pi_{F} + \beta_{3}\kappa_{3}\pi_{C}}{\pi_{H}\pi_{C}\pi_{F} - \eta\kappa_{2}\pi_{F} - \delta_{2}\kappa_{3}\pi_{C}}\biggr),
				\\[3ex]
				\varPhi_{6} = \dfrac{\varphi \beta_{2} + \varPhi_{5}\eta}{\pi_{C}},
				\quad \text{and }\quad
				\varPhi_{7} = \dfrac{\varphi \beta_{3} + \varPhi_{5}\delta_{2}}{\pi_{F}} ,
			\end{array}
		\end{equation}
		where the parameter groupings $\pi_*$ are given by (\ref{Eqs3b:3}).
		
		After some simplifications using (\ref{Eqs3b:3}), 
		the time derivative of the Lyapunov function can be written as
		\begin{equation}
			\label{Eqs3c:3}
			\dfrac{dV}{dt} \le \pi_{E} \bigg(\mathcal{R}_{0}-1\bigg) E.	
		\end{equation}
		It is now clear that if $\mathcal{R}_0 < 1$ then $dV/dt\le 0$.
		Furthermore, $dV/dt = 0 $ if $E=0$ and $\mathcal{R}_0<1$. 
		Thus, when $ \mathcal{R}_{0} < 1$, the largest compact invariant set in 
		$\left\{\left( S,E,I,Q,P,H,C,F,R \right)\in\Omega\;|\;\dot{V}\le 0\right\}$ 
		is the single state $\mathcal{E}_0$. 
		LaSalle’s Invariance Principle then implies that $\mathcal{E}_{0}$ 
		is globally asymptotically stable in $\Omega$ if $\mathcal{R}_{0}<1$. 
	\end{proof}
	
	
	\section{CoVCom9 model estimation and numerical simulations}
	\label{S:4}
	
	\subsection{Methodology}
	
	In this section, we briefly describe the parameter estimation and numerical simulation process used to investigate how well the proposed CoVCom9 model (\ref{Eqs2:3}) agrees with the confirmed cases and deaths in Ghana. Here, we consider the SARS-CoV-2 confirmed cases and deaths from March 13, 2020 to August 10, 2020 as reported in Ghana. The data are obtained from Our World in Data \citep{ritchie2020coronavirus}.
	
	The CoVCom9 model (\ref{Eqs2:3}) has nine state variables; to obtain the disease-induced mortality ($D$), we introduce the extra equation
	\begin{equation} \allowdisplaybreaks \label{Eqs4:1}
		\dfrac{dD}{dt} = d_{1} E + d_{2} I + d_{3} Q + d_{4} P + d_{5} H + d_{6} C + d_{7} F,
	\end{equation}
	which introduces no additional parameters.
	The CoVCom9 model has a total of 35 parameters to estimate using limited data (confirmed-positive cases and deaths only). This results in identifiability issues causing the non-convergence of the optimisation of the objective function. We implement the following practical principles to choose reasonable initial parameter values:
	\begin{enumerate}
		
		\item Expert review process which involves asking health experts and/or consulting the relevant literature as well as individuals' experience of the infection. Accordingly, an estimate of the model parameters, natural birth rate, $\mu$, recruitment rate, $ \Lambda$, incubation period, $ \epsilon_{1} $, and recovery rate of quarantine/self-isolation at home individual, $ \delta_{1}$ are obtained. We assumed that the life expectancy of people in Ghana is estimated as 64.35 years \citep{asamoah2020global}, then the natural death rate is estimated as $ \mu = 1/(64.35 \times 365) \approx 4.258 \times 10^{-5} $ per day.
		The population of Ghana in 2020 is estimated to be $ N = 30,960,000$ \citep{gss2020}, and the recruitment rate of humans is estimated as $\Lambda = \mu N \approx 1.318 \times 10^{3}$ people per day.
		The incubation period is 3–7 days, here we choose $ \epsilon_{1} = 1/5.88$ per day as estimated by \cite{pang2020transmission} which is consistent with the wider literature \citep{anderson2020will,li2020early}. The self-isolated positive-confirmed individuals on medication take 14 days on average to recover, thus we assume $\delta_{1} = 1/14$ per day.
		
		\item Exploring the model using the available data (also known as `system exploratory analysis' (SEA) \citep{sms2018}). This process helps identify ranges of parameter values where the trajectories of the CoVCom9 are consistent with the data, and regions of parameter space where trajectories deviate from the times series data of confirmed-positive cases and deaths.
		The motivation for this approach is to restrict the ranges of the parameters and so reduce risk of the Monte Carlo simulation getting trapped at a local optima. Since we have 31 remaining model parameters to infer, applying this SEA technique yields upper and lower  bounds for the model parameters which are presented in (Table \ref{Ts7:2}).
	\end{enumerate}
	
	We use a Monte Carlo least squares method to infer model parameter since it is reliable and efficient. This method seeks to generate the best Monte Carlo estimate ($\widehat \theta_j$) of the model parameters ($\theta$, listed in Table \ref{Ts2a:2}) by minimising the error between the observed data (confirmed-positive cases and deaths), $\boldsymbol{Y}_{j}$ and the simulated data from the CoVCom9 model (\ref{Eqs2:3}), $ \boldsymbol{Y}_{j}^{sim}$ given by the variables listed in Table \ref{Ts2a:1}. 
	Denoting the total number of data-points by $n$ and using $i$ 
	($1 \leq i \leq M$) to enumerate the Monte Carlo simulations, we have 
	\begin{equation}
		\allowdisplaybreaks
		\label{Eqs4:2}
		\widehat{\boldsymbol{\theta}}_{j}^{(i)} = \arg \min_{\theta}\sum_{j=1}^{n} \biggl(\boldsymbol{Y}_{j}-\boldsymbol{Y}_{j}^{sim}\biggr)^{2}, \quad (i=1,2,3,\cdots,M) . 
	\end{equation}
	
	Finally, for the $M$ Monte Carlo samples of $ \widehat{\boldsymbol{\theta}} $, 
	we obtain the mean and covariance matrix of the estimator, $ \widehat{\theta}_{M} $ of $ \boldsymbol{\theta} $ as
	\begin{equation}
		\label{Eqs4:3}
		\widehat{\boldsymbol{\theta}}_{M} = \frac{1}{M} \sum_{i=1}^{M}\widehat{\boldsymbol{\theta}}^{(i)},
	\end{equation}
	
	\begin{equation}
		\label{Eqs4:4}
		\widehat{\boldsymbol{\Sigma}}_{M}=\frac{1}{M-1} \sum_{i=1}^{M}\left(\widehat{\boldsymbol{\theta}}^{(i)}-\widehat{\boldsymbol{\theta}}_{M}\right)\left(\widehat{\boldsymbol{\theta}}^{(i)}-\widehat{\boldsymbol{\theta}}_{M}\right)^{T}.
	\end{equation}
	We also give a 95\% confidence interval of the Monte Carlo 
	samples $\{\widehat{\boldsymbol{\theta}}^{(i)}\}_{i=1}^{M}$ as
	\begin{equation}
		\label{Eqs4:5}
		\biggl(\widehat{\boldsymbol{\theta}}_{M}^{*(0.025)} \, , \;
		\widehat{\boldsymbol{\theta}}_{M}^{*(0.975)}\biggr),
	\end{equation}
	where $\widehat{\boldsymbol{\theta}}_{M}^{*(0.025)}$ and
	$\widehat{\boldsymbol{\theta}}_{M}^{*(0.975)}$ are respectively the
	$\widehat{\boldsymbol{\theta}}^{*(i)}$ in the 2.5\% and 97.5\% positions of
	the ordered Monte Carlo samples $\{\widehat{\boldsymbol{\theta}}^{*(i)}\}_{i=1}^M$.
	
	During parameter estimation, we use a logarithmically transformed
	parameter vector, $\log\boldsymbol{\theta}$, since:
	(i) this conveniently ensures that all parameters are positive, $\theta>0$; and
	(ii) this improves the numerical search of the parameter space across a wide
	range of $\boldsymbol{\theta}$ \citep{bland1996statistics,acheampong2019modelling}.
	All computations use MATLAB, 2018a  .
	
	
	\subsection{Results of CoVCom9 model parameter estimation}
	\label{S:4a}
	
	\begin{table}[H] 
		\hspace*{0.05in}
		\centering
		\begin{minipage}{0.9\textwidth}
			\fontsize{8}{10}\selectfont
			\caption{Estimated initial values of model variables for the system (\ref{Eqs2:2} using Monte Carlo least squares (MC-LS) method). }
			\label{Ts4a:1}
			\centering
			\begin{tabular}{p{1cm} p{23mm} p{35mm} p{19mm}}
				\hline 
				\multicolumn{1}{l}{Variables}
				& \multicolumn{1}{l}{Initial values}
				& \multicolumn{1}{l}{95\% Confidence Interval }
				& \multicolumn{1}{l}{Reference}\\  \hline \hline 
				$ N $ & 30,955,202 & & \cite{gss2020}\\
				$ S $ & 30,954,982 & & \\
				$ E $ & 214.0 & 79.49 - 261.3 & MC-LS\\ \hline 
				$ I $ & 0.346689 & 0.1959 - 3.579033 & MC-LS\\
				$ Q $ & 2.932 & 1.732 - 11.85 & MC-LS\\
				$ P $ & 2 & & \cite{ritchie2020coronavirus}\\ \hline 
				$ H $ & 0 & & \cite{ritchie2020coronavirus}\\
				$ C $ & 0 & & \cite{ritchie2020coronavirus}\\
				$ F $ & 0 & & \cite{ritchie2020coronavirus}\\
				$ R $ & 0 & & \cite{ritchie2020coronavirus}\\
				\hline 
			\end{tabular}
		\end{minipage}
	\end{table}

	In this section, the results obtained using the Monte Carlo least-squares technique described in Section \ref{S:4} are presented.
	Table \ref{Ts4a:1} shows initial values of the state variables;  those for compartments $E$, $I$ and $Q$ are estimated from the reported data.
	From Table \ref{Ts4a:1}, we infer that while on March 13, 2020 two individuals are reported to be confirmed-positive of SARS-CoV-2 infection, the corresponding number of individuals in the exposed ($E$), infectious ($I$) and quarantine-suspected ($Q$) compartments are approximately 213, 1, and 3 respectively.
	
	Table \ref{Ts4a:2} gives the parameter values obtained together with their confidence intervals.
	We note that the infectivity of the individuals in the infected compartment ($I$) is stronger than the other compartments: in decreasing order, the infectivities are due to the groups $E$, $F$, $Q$, $H$, $C$, and $P$.
	The overall transmission rate of the SARS-CoV-2 infection in Ghana for the duration of the data considered in this study is $\varphi=$0.02495 per day, .

	\begin{table}[H] 
		\centering
		\begin{minipage}{0.97\textwidth}
			\fontsize{8}{10}\selectfont
			\caption{Estimated values of the model parameters for the system (\ref{Eqs2:2}) using Monte Carlo least squares (MC-LS) method.}
			\label{Ts4a:2}
			\centering
			\begin{tabular}{p{10mm} p{22mm} p{43mm} p{27mm}}
				\hline 
				\multicolumn{1}{l}{Parameter}
				& \multicolumn{1}{l}{Value}
				& \multicolumn{1}{l}{95\% Confidence Interval}
				& \multicolumn{1}{l}{Reference}\\ \hline \hline 
				$ \alpha_{1} $ & 0.8412
				& (0.2981 $,\;$ 1.0000) & MC-LS\\
				$ \varphi $ & 0.02495
				& (0.02300 $,\;$ 0.03969) & MC-LS\\
				$ \alpha_{2} $ & 0.3152
				& (0.2106 $,\;$ 0.7553) & MC-LS\\ \hline
				$ \alpha_{3} $ & 0.05744
				& (0.02168 $,\;$ 0.08098) & MC-LS\\
				$ \beta_{1} $ & 0.2606
				& (0.09697 $,\;$ 0.3576) & MC-LS\\
				$ \beta_{2} $ & 0.1205
				& (0.06108 $,\;$ 0.2436) & MC-LS\\ \hline
				$ \beta_{3} $ & 0.4857
				& (0.1787 $,\;$ 0.6772) & MC-LS\\
				$ \epsilon_{1} $ & 1/5.882
				& (1/7 $,\;$ 1/3) & \cite{zhang2020novel}\\
				$ \epsilon_{2} $ & 0.001144
				& (0.000873 $,\;$ 0.003217) & MC-LS\\ \hline
				$ \gamma_{1} $ & 0.01004
				& (0.008402 $,\;$ 0.02817) & MC-LS\\
				$ \gamma_{2} $ & 0.000163
				& (8.663$\times 10^{-5}$ $,\;$ 3.300$\times 10^{-4}$) & MC-LS\\
				$ \upsilon_{1} $ & 0.000524
				& (0.000379 $,\;$ 0.001293) & MC-LS\\ \hline
				$ \upsilon_{2} $ & 1.418$\times 10^{-9}$
				& (6.864$\times 10^{-10}$ $,\;$ 2.745$\times 10^{-9}$) & MC-LS\\
				$ \rho_{1} $ & 0.001971
				& (0.000749 $,\;$ 0.002364) & MC-LS\\
				$ \rho_{2} $ & 5.075$\times 10^{-6}$
				& (2.494$\times 10^{-6}$ $,\;$ 9.911$\times 10^{-6}$) & MC-LS\\ \hline
				$ \rho_{3} $ & 0.004711
				& (0.001950 $,\;$ 0.005565) & MC-LS\\
				$ \kappa_{1} $ & 0.008619
				& (0.005728 $,\;$ 0.02076) & MC-LS\\
				$ \kappa_{2} $ & 5.844$\times 10^{-6}$
				& (2.874$\times 10^{-6}$ $,\;$ 1.150$\times 10^{-5}$) & MC-LS\\ \hline
				$ \kappa_{3} $ & 3.009$\times 10^{-5}$
				& (1.488$\times 10^{-5}$ $,\;$ 5.949$\times 10^{-5}$) & MC-LS\\
				$ \delta_{1} $ & 1/14
				& (1/23 $,\;$ 1/11) & Assumed\\
				$ \delta_{2} $ & 5.865$\times 10^{-9}$
				& (2.863$\times 10^{-9}$ $,\;$ 1.145$\times 10^{-8}$) & MC-LS\\ \hline
				$ \eta $ & 9.771$\times 10^{-5}$
				& (4.862$\times 10^{-5}$ $,\;$ 0.000194) & MC-LS\\
				$ \tau $ & 1.538$\times 10^{-8}$
				& (7.741$\times 10^{-9}$ $,\;$ 3.096$\times 10^{-8}$) & MC-LS\\
				$ d_{1} $ & 7.780$\times 10^{-10}$
				& (3.893$\times 10^{-10}$ $,\;$ 1.557$\times 10^{-9}$) & MC-LS\\ \hline
				$ d_{2} $ & 1.249$\times 10^{-13}$
				& (6.222$\times 10^{-14}$ $,\;$ 2.488$\times 10^{-13}$) & MC-LS\\
				$ d_{3} $ & 0.002877
				& (0.001032 $,\;$ 0.003985) & MC-LS\\
				$ d_{4} $ & 6.004$\times 10^{-10}$
				& (2.997$\times 10^{-10}$ $,\;$ 1.199$\times 10^{-9}$) & MC-LS\\ \hline
				$ d_{5} $ & 1.392$\times 10^{-12}$
				& (6.839$\times 10^{-13}$ $,\;$ 2.735$\times 10^{-12}$) & MC-LS\\
				$ d_{6} $ & 6.967$\times 10^{-14}$
				& (3.413$\times 10^{-14}$ $,\;$ 1.365$\times 10^{-13}$) & MC-LS\\
				$ d_{7} $ & 2.455$\times10^{-12}$
				& (1.201$\times10^{-12}$ $,\;$ 4.804$\times10^{-12}$) & MC-LS\\
				\hline 
			\end{tabular}
		\end{minipage}
	\end{table}

	The corresponding best fits of the model to the reported data and the two-year simulations based on the estimated parameter estimates are shown in Figure \ref{Fs4a:1}.
	
	\begin{figure}[H] 
		\hspace*{-0.55in}
		{\includegraphics[scale=0.27]{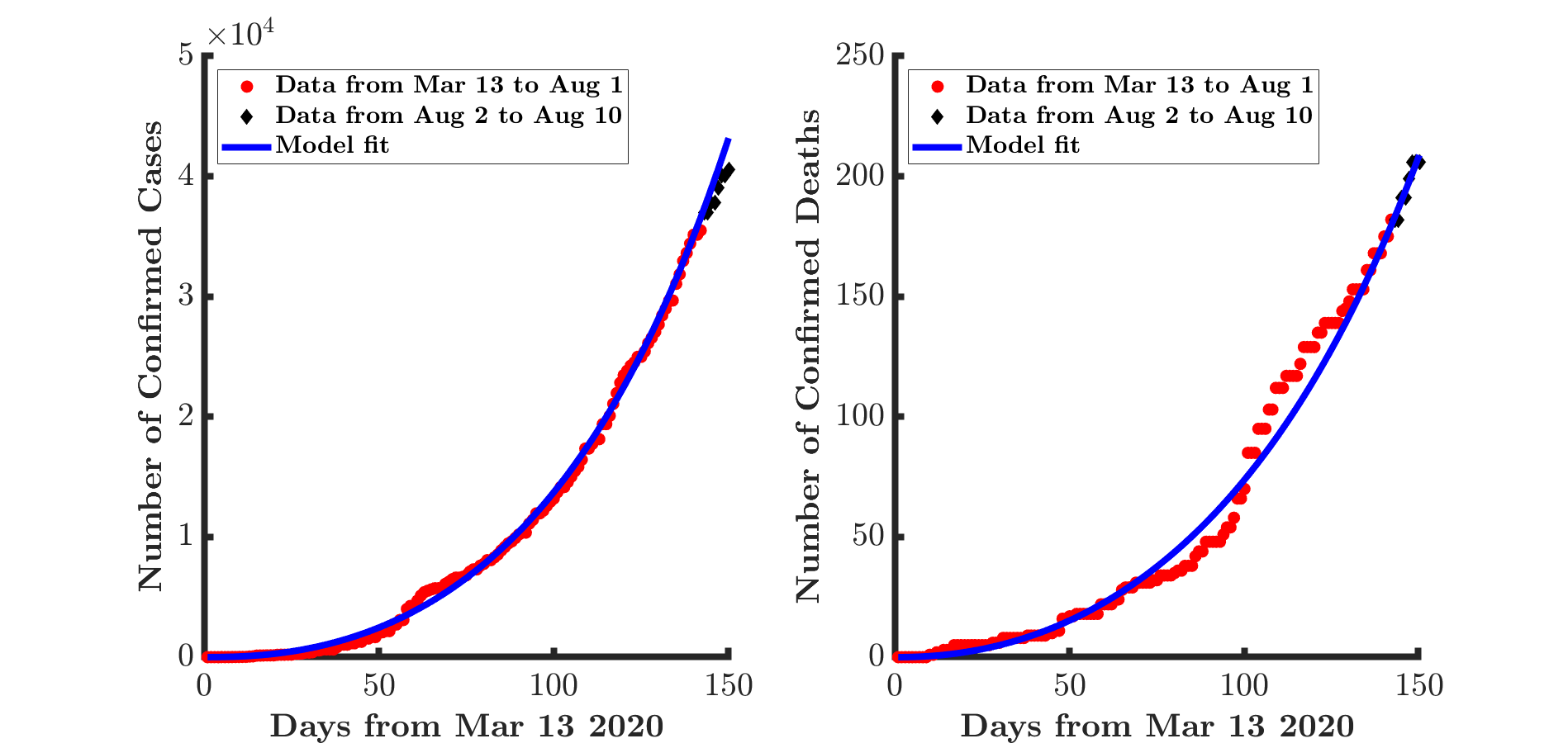}}
		\caption{\textit{Dynamics of CoVCom9 model showing model fit (\textbf{blue line}) and reported data (\textbf{red and black dots}) for (\textbf{Left panel}) daily numbers of confirmed cases simulated from the CoVCom9 model and the numbers from the report data (\textbf{Left panel}) daily numbers of confirmed deaths simulated from the CoVCom9 model and the numbers from the report data from March 13, 2020 to August 10, 2020}.}
		\label{Fs4a:1}
	\end{figure}
	
	The rate at which individuals transfer from Classes $E$ to $Q$ is $\epsilon_2=0.001144$, indicating that each day only 0.11\% of the individuals exposed to the SARS-CoV-2 infection are identified with the suspicion of carrying the infection, and can be contact traced and quarantine in order to clinically confirm their status as either positive or negative of the infection. The rate at which these suspected exposed individuals are confirmed positive is estimated to be $\upsilon_1=0.000524$. For individuals confirmed-positive with the SARS-CoV-2 infection, we can infer from Table \ref{Ts4a:2} that the rate at which individuals progress to Intensive Care (Class $C$) is low compared to the rate at which they progress to either $H$ or $F$ Classes (standard hospital ward or self-isolating at home), with the rate of progression from $P$ to $F$ Classes the highest (that is from positive test to home isolation).  The recovery rate of individuals in Class $H$ is estimated as $\kappa_1=0.008619$ and the rate at which these individuals losing immunity and becoming susceptible to the SARS-CoV-2 infection is $\tau=1.538\times 10^{-8}$; indicating that the rate of SARS-CoV-2 re-infection in Ghana is extremely low (full details of parameters and ranges is given in Table \ref{Ts4a:2}).
	
	From equation (\ref{Eqs3b:6}) and the parameter estimates in 
	Table \ref{Ts4a:2}, the basic reproduction number, $\mathcal{R}_0$,  
	is estimated to be 3.110. The breakdown of this estimate is given, 
	in decreasing order, by
	\begin{itemize}
		\item primarily, symptomatic individuals (class $I$, giving 
		$\mathcal{R}_{0I} = 2.417$),
		\item hospitalised cases (class $H$, contributing 
		$\mathcal{R}_{0H} = 0.212$),
		\item positively tested individuals (class $P$ giving 
		$\mathcal{R}_{0P} = 0.207$),
		\item infections due asymptomatic cases (class $E$, giving         
		$\mathcal{R}_{0E}=0.123$),
		\item self-isolating individuals (class $F$ contributing 
		$\mathcal{R}_{0F} = 0.116 $),
		\item intensive care cases (class $C$, giving $\mathcal{R}_{0C} = 0.020$),
		\item those quarantined at home (class $Q$ contributing 
		$\mathcal{R}_{0Q}=0.015$).
	\end{itemize}
	The basic reproduction number of COVID-19 based on the proposed CoVCom9 model 
	for Ghana is higher than that of many other countries, which indicates a 
	greater epidemic risk in Ghana. A recent study by \cite{asamoah2020global} 
	provides a similar estimate of $\mathcal{R}_0$ in Ghana of $2.64$, 
	differs by only  15\% from our estimate.   
	However, the number of deaths reported in Ghana is low compared to 
	that of other countries in the world.
	For published values for other  countries, 
	please see 
	\cite{ali2020role,zeb2020mathematical,brandi2020epidemic,mwalili2020seir,kumar2020data,mushayabasa2020role,gotz2020early,chen2020introduction,sardar2020assessment,ivorra2020mathematical,asamoah2020global,rahman2020basic}. 
	
	Using the estimated parameter values given in Tables \ref{Ts4a:1} and \ref{Ts4a:2}, the one-year simulation transmission dynamics of the CoVCom9 model offers insight into the SARS-CoV-2 among Ghanaian with respect to the COVID-19 protocols which are in place in the country. Figure \ref{Fs4a:2} depicts the one-year simulation dynamics for the classes $E$, $I$, $Q$, $P$, $H$, $C$, $F$, and deaths ($D$).
	
	\begin{figure}[H] 
		\hspace*{-0.60in}
		\centering
		\includegraphics[scale=0.28]{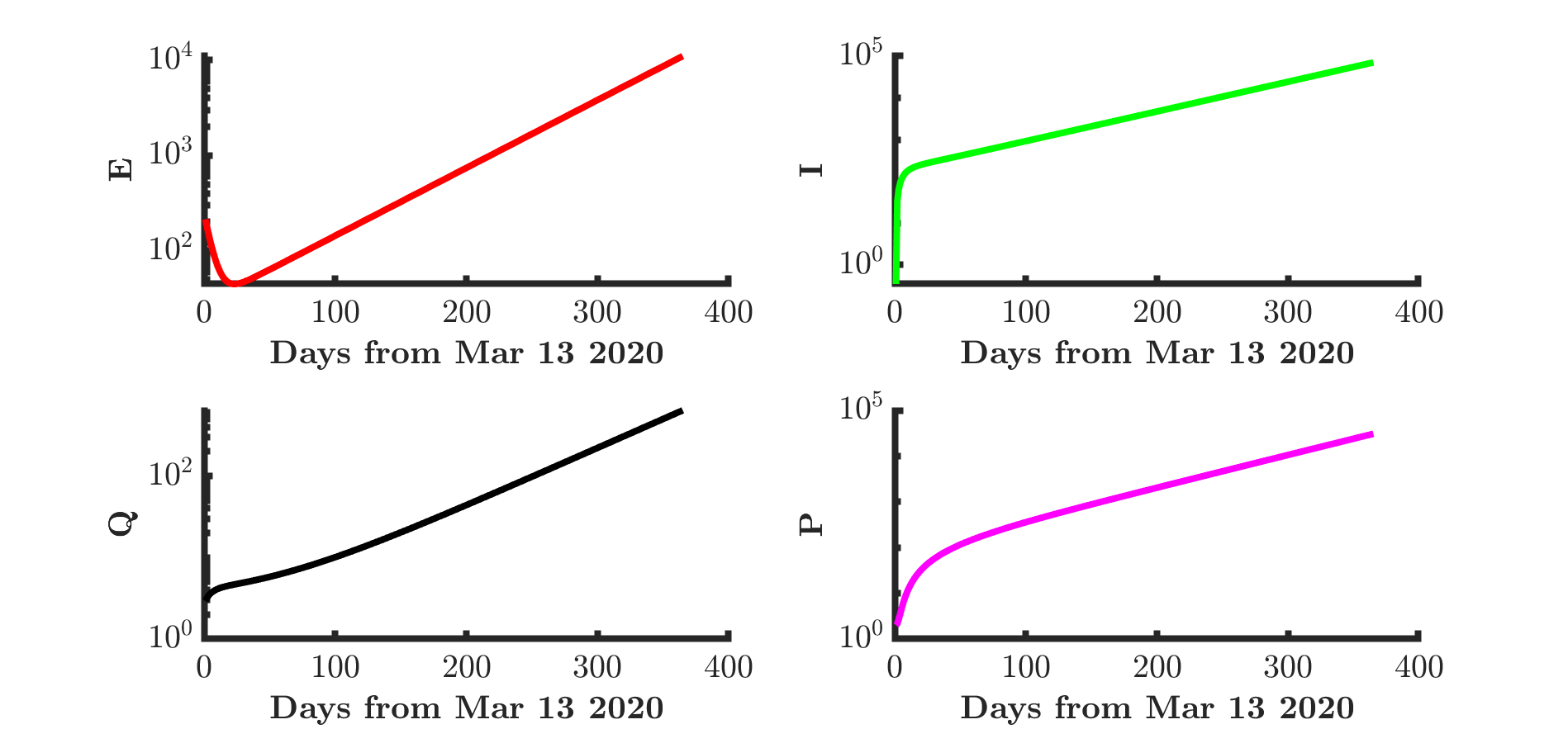}\\
		\hspace*{-0.6in}
		\includegraphics[scale=0.28]{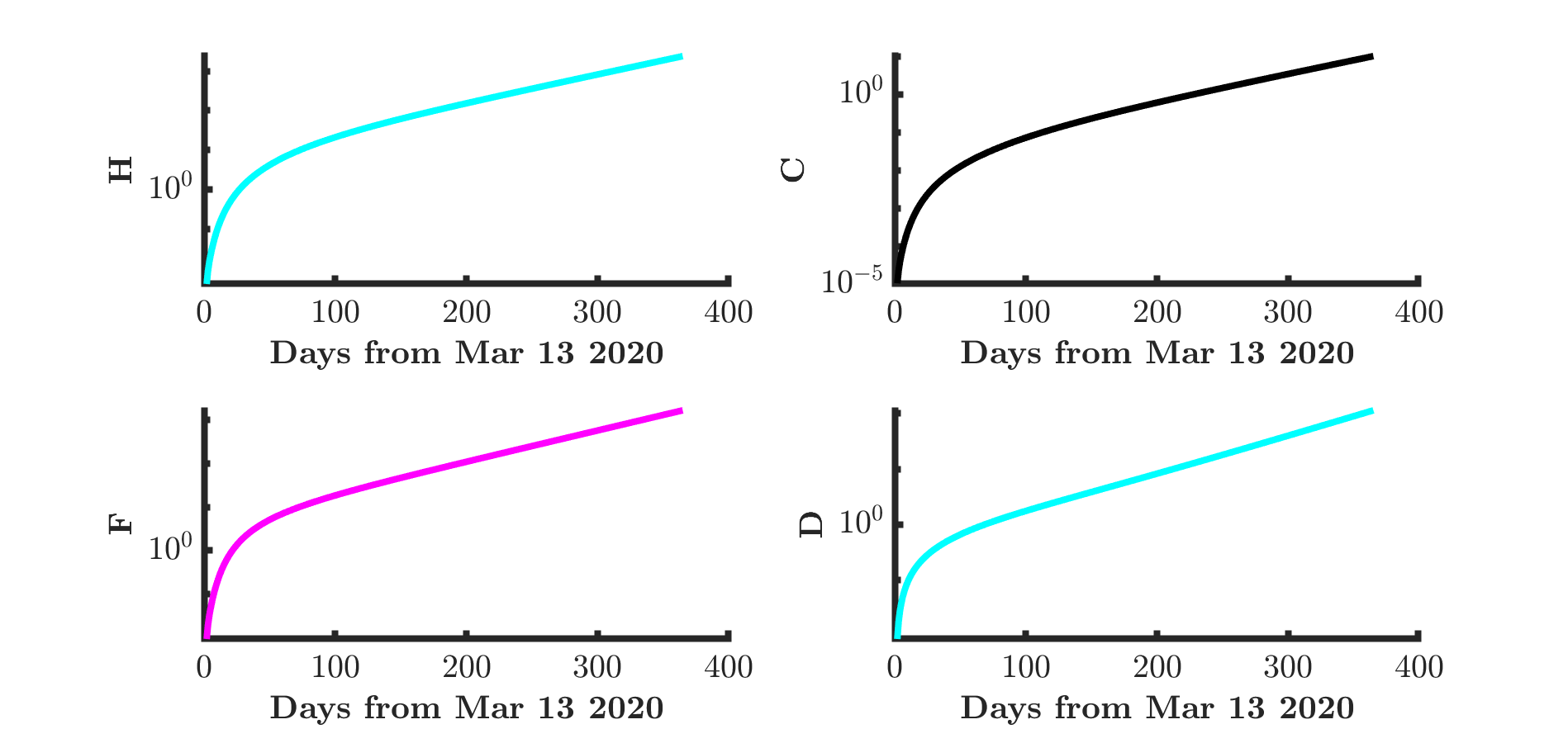}
		\caption{\textit{One-year simulation dynamics of CoVCom9 model from March 13 2020 where $E$, $I$, $Q$, $P$, $H$, $C$, $F$ and $D$ are respectively exposed, infectious, quarantine suspected-expose, confirmed-positive, hospitalised at ordinary ward, hospitalised at intensive care unit, and deaths with the vertical axis on a log-scale}.}
		\label{Fs4a:2}
	\end{figure}
	
	As shown in Figure \ref{Fs4a:2}, all state variables in the CoVCom9 model show an increasing trend, indicating that Ghana continuing the same protocols may not be enough to eradicate the SARS-CoV-2 infection. This has been further complicated by theopening of the the borders, 
	meaning that new control measures are needed to mitigate the spread (both in and out).  Our projections show that with Ghana exercising current  COVID-19 protocols the actual cases substantially exceed those reported (whether hospitalised or only positively tested).
	We thus expect the exponential growth to continue. 
	
	In the next section we discuss the derivation of the basic reproduction number from the CoVCom9 model, and identify influential parameters that intervention strategies should focus on in order to control the spread of the virus.
	
	
	\section{Uncertainty and sensitivity analysis of the basic reproduction number}
	\label{S:5}
	
	\subsection{Methodology}
	
	The proposed CoVCom9 model (\ref{Eqs2:3}) has many unknown parameters. Due to the limited data available, there is substantial uncertainty in calibrating the values of the 31 CoVCom9 model (\ref{Eqs2:3}) parameters \citep{marino2008methodology}.  However, in all cases the ratio of the upper bounds of the 95 \% confidence interval is less than five times the lower bound, and more often four or below,  thus so the order of magnitude of all parameters is well established.
	Since the intervals are derived using the logarithm of parameter values, and our best estimates lie in the centre of this band, each upper bound is approximately twice the estimate and the lower bound half of it.
	This uncertainty in model parameters results in some variability in the prediction of the basic reproduction number $ \mathcal{R}_{0}$.  Latin Hypercube Sampling-Partial Rank Correlation Coefficient (LHS-PRCC) sensitivity analysis was used to evaluate variabilities in the basic reproduction number $ \mathcal{R}_{0} $. The LHS-PRCC approach provides an opportunity to examine the entire parameter space of the CoVCom9 model (\ref{Eqs2:3}) with computer simulations.
	
	We analyse the impacts of the LHS parameters on the basic reproduction number $ \mathcal{R}_{0} $ of the CoVCom9 model (\ref{Eqs2:3}) via standard Monte Carlo procedure. The key parameters to which $\mathcal{R}_{0}$, given by (\ref{Eqs3b:6}), is most sensitive are determined using the PRCCs values, suggesting the most effective way of controlling SARS-CoV-2 infection. Moreover, this analysis also identifies which parameters need to be known precisely when estimating $\mathcal{R}_{0}$ from data \citep{marino2008methodology}.
	
	The application of the combined LHS-PRCC methodology in infectious disease modelling are  fully described elsewhere, for example, in \citep{wu2013sensitivity,marino2008methodology}. This method generally involves: 
	\begin{description}
		\item[(i)] generating LHS parameters in matrix form, together with a ranking of outcome measures $\mathcal{R}_{0}$;  
		\item[(ii)] construction of two linear regression models in response to each parameter and outcome measure, and 
		\item[(iii)] computation of a Pearson rank correlation coefficient for the residuals from the two regression models to obtain the PRCC values for that particular parameter \citep{marino2008methodology,orwa2019uncertainty}.  
	\end{description}
	We induce the correlation between the input parameters using the rank-based method of Iman and Conover \citep{iman1982distribution}. The correlation matrix for the 28 model parameters (listed in Table \ref{Ts2a:2}) is obtained from the parameter estimation in Section \ref{S:4}, where no correlation is assumed between the parameters $ \epsilon_{1} $ and $\delta_{1} $ and other parameters, since these two parameters are not included in the parameter estimation.
	
	
	\subsection{Results of Analysing the LHS-PRCC for the CoVCom9 model}
	\label{S:5a}
	
	The result of the uncertainty analysis of the basic reproduction number, $\mathcal{R}_{0}$ (\ref{Eqs3b:6}) of the CoVCom9 obtained by generating 1000 LHS samples using the Monte Carlo technique is presented in Figure \ref{Fs5a:1}. This histogram depicts the uncertainty in $\mathcal{R}_{0}$, where the degree of uncertainty quantified via the 95\% confidence intervals is indicated by the dashed lines. Figure \ref{Fs5a:2} shows the distribution of obtained values for $\mathcal{R}_0$, the mean, 5th, and 95th percentiles being respectively 2.623, 2.042, and 3.240.
	
	\begin{figure}[H] 
		\centering
		{\includegraphics[scale=0.25]{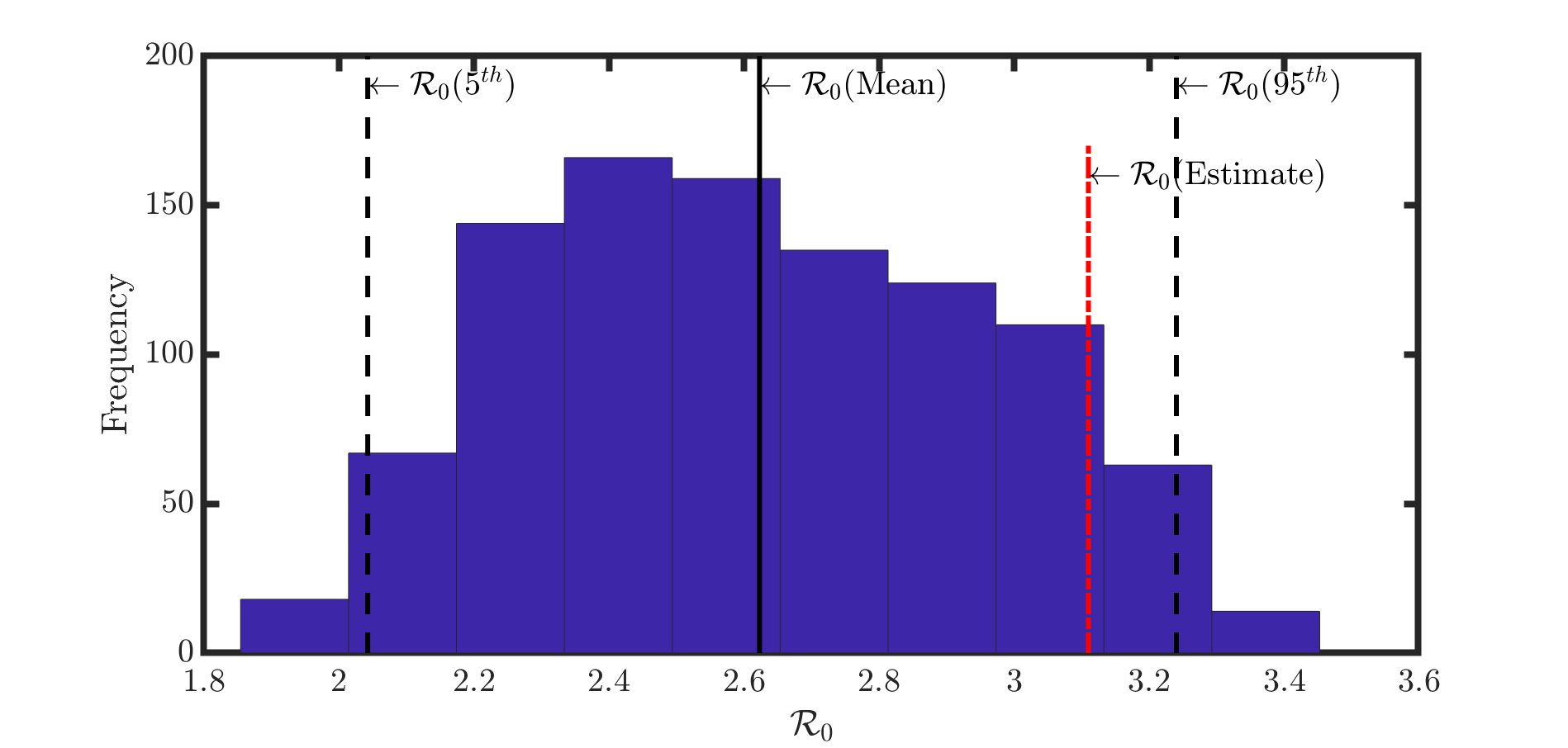}}
		\caption{
			{Uncertainty analysis of the basic reproduction number $ \mathcal{R}_{0} $ depicted by the histogram with plot showing 95\% confidence interval (dashed lines), mean (solid line) and an estimate (red dotted-dashed line) of $\mathcal{R}_{0}$ (\ref{Eqs3b:6})}.}
		\label{Fs5a:1}
	\end{figure}
	
	Using the best-fit values of all the parameters given in Table \ref{Ts4a:2} yields an estimate of $\mathcal{R}_0$ towards the upper end of the distribution, namely a value of 3.110 (see the red dotted-dashed line). In general, the higher the uncertainty, the wider the spread of the distribution of $\mathcal{R}_0$. We note that there is some uncertainty in $\mathcal{R}_0$ due to the model parameter estimates in Table \ref{Ts4a:2}; however, this is less than for most parameters.
	In Table \ref{Ts4a:2}, for almost all parameters, the upper and lower 95\% confidence intervals differ from the best fit value by a factor of two. However, for $\mathcal{R}_0$, the upper and lower ends of the interval are with $\pm$24\% of the mean value; thus overall, the uncertainty in the estimate of $\mathcal{R}_0$ is less than that of the individual parameters.
	
	\begin{figure}[H] 
		\hspace*{-0.15in}
		{\includegraphics[scale=0.25]{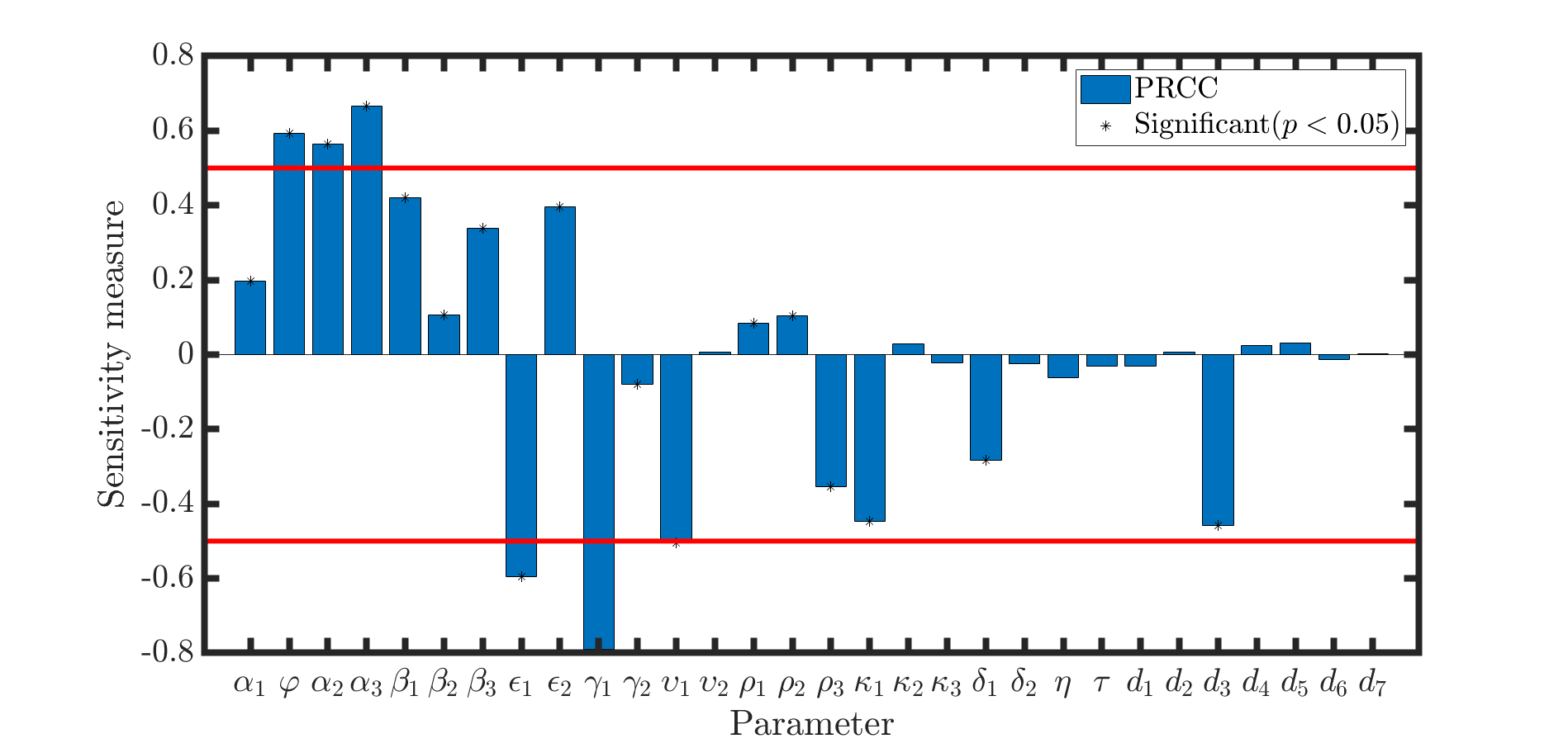}}
		\caption{
			{Sensitivity of the basic reproduction number $\mathcal{R}_{0} $ to changes in the CoVCom9 parameters using PRCC index}.}
		\label{Fs5a:2}
	\end{figure}
	
	Figure \ref{Fs5a:2} shows the sensitivity of the reproduction number $\mathcal{R}_0$ to each of the parameters in the underlying model (\ref{Eqs2:3}).
	PRCC assigns each parameter a value between $-1$ and $+1$. The magnitude of PRCC shows the parameter importance while the sign of PRCC gives the direction of the relationship between the input parameter and the model output of interest. Negative PRCC values mean that as the parameter value increases, the value of the model output of interest decreases and vice versa.
	The results of the PRCCs depicted help identify which parameters are primarily responsible for the uncertainty in $\mathcal{R}_0$, which suggests those interventions which should be most efficacious in controlling the spread of the virus by reducing $\mathcal{R}_{0}$.
	A PRCC value of zero gives an indication of no association between the input parameter and model output of interest.  The most significant model parameters are those associated with small $p-$values ($p < 0.05$) and large magnitude PRCC values ($ 0.5 \le |PRCC| \le 1$).
	
	From Figure \ref{Fs5a:2}, we identify six parameters as most influential
	on the basic reproduction number, $\mathcal{R}_{0}$, these are:
	\begin{itemize}
		\item $\varphi$ - the transmission rate of infectious individuals,
		\item $\alpha_{2}$ -  the probability of transmission of quarantine suspected infectious individuals,
		\item $\alpha_{3}$ - the probability of transmission of confirmed-positive infectious individuals,
		\item $\epsilon_{1}$ - the progression rate of exposed individuals to infectious class \item $\gamma_{1}$ - the progression rate of infectious individuals to confirmed-positive class, and
		\item $\upsilon_{1}$ - the progression rate of quarantine suspected infectives to the class of confirmed-positive cases.
	\end{itemize}
	In particular, $\mathcal{R}_{0}$ increases with increases in $\varphi$, $\alpha_{2}$ and $\alpha_{3}$, while $\mathcal{R}_{0}$ decreases with increases in $\epsilon_{1}$, $\gamma_{1}$ and $\upsilon_{1}$.  It is therefore critical that intervention strategies should be aimed at decreasing the values of $\varphi$, $\alpha_{2}$ and $\alpha_{3}$ and increasing the values of $\epsilon_{1}$, $\gamma_{1}$ and $\upsilon_{1}$.
	
	These recommendations should not be interpreted as discounting the value of considering efforts to alter other significant model parameters such as probability of transmission of hospitalised individuals at ordinary ward ($\beta_{1}$), the progression rate of exposed individuals to quarantine suspected exposed class ($\epsilon_{2}$), and recovery rate of hospitalised individual ($\kappa_{1}$).
	
	\subsection{Predicting the effects of lockdown}
	\label{S:ns}
	
	The simulation presented in Figure \ref{Fs4a:2} show a worrying trend 
	of exponential growth with no sign of plateau or reduction in the 
	effects of the pandemic.  Many countries have implemented a `lockdown', 
	that is regulations to restrict social interactions and so reduce 
	the spread of the disease.  Here, we model the effects of lockdown 
	by a simple reduction in the parameter $\varphi$, and simulate 
	the spread by solving the model using the standard value of $\varphi$ 
	for the first 350 days, and a lower value of $\varphi$
	for the time period $350 \leq t \leq 700$ days.   The results are 
	presented in Figures \ref{Fsns:6}, \ref{Fsns:7}, \ref{Fsns:8}, 
	for the values $\varphi = 0.008$, 0.016, 0.004, 
	the first value of $\varphi$ being chosen so as to reduce the 
	expected value of $\mathcal{R}_0$ from 3.110 to 0.995, the threshold 
	required for containment of the epidemic. The second and third 
	values are chosen to be double and half of this critical value. 
	Note that the vertical scales in Figures \ref{Fsns:6}, 
	\ref{Fsns:7}, and \ref{Fsns:8} are not identical. 
	
	\begin{figure}[H] 
		\hspace*{-0.60in}
		\centering
		\includegraphics[scale=0.28]{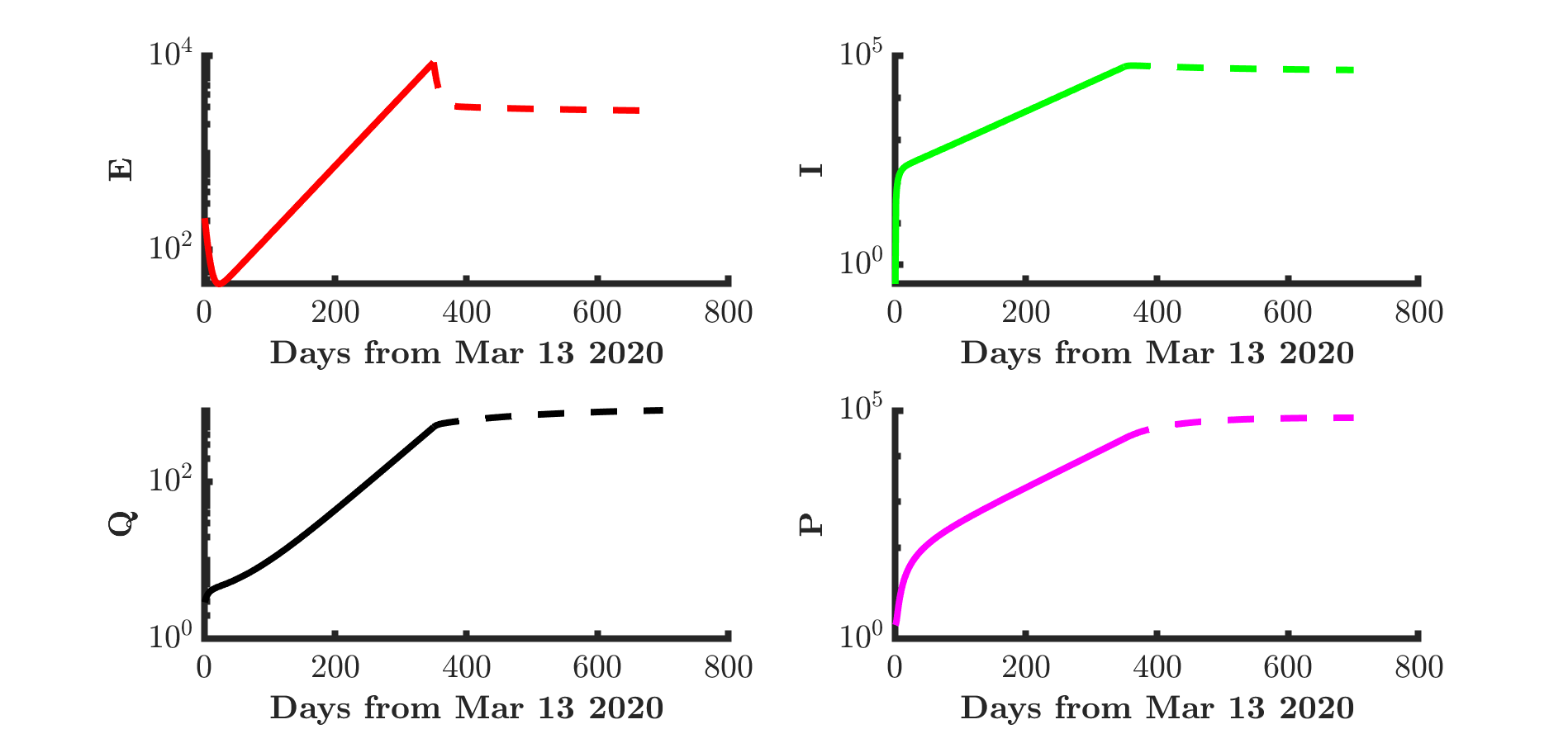}\\
		\hspace*{-0.6in}
		\includegraphics[scale=0.28]{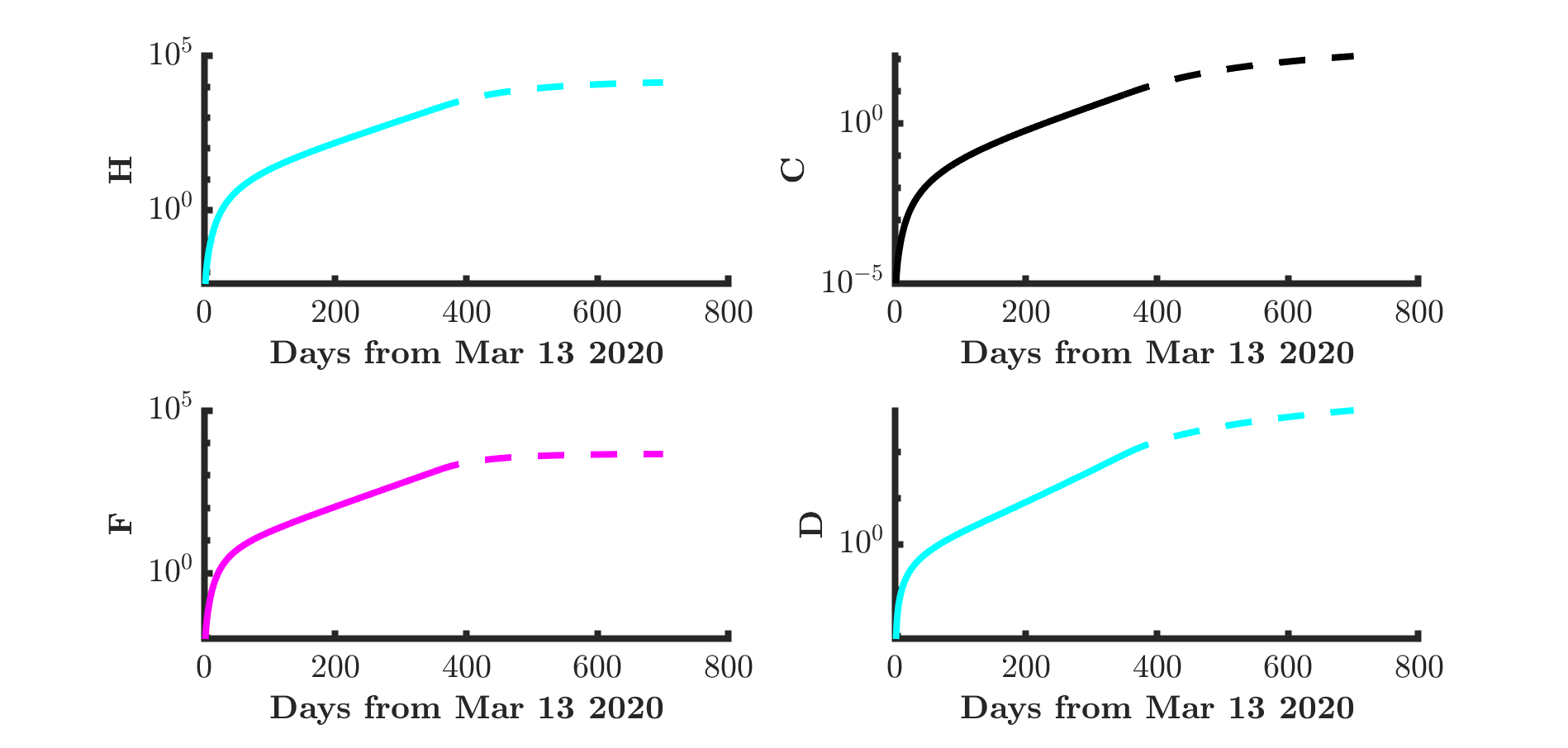}
		\caption{\textit{One-year simulation dynamics of CoVCom9 model from March 13 2020 when there is a 68\% reduction in $\varphi$, that is, to $\varphi=0.008$. Here $E$, $I$, $Q$, $P$, $H$, $C$, $F$ and $D$ are respectively exposed, infectious, quarantine suspected-expose, confirmed-positive, hospitalised at ordinary ward, hospitalised at intensive care unit, and deaths with the vertical axis on a log-scale. $\mathcal{R}_{0}$ changes from 3.110 to 0.995}.}
		\label{Fsns:6}
	\end{figure}
	
	Figure \ref{Fsns:6} shows a clear almost instant reduction 
	in the number of exposed people ($E$), followed by a plateau, 
	whilst the sizes of most other sub-populations plateau. 
	However, the numbers of hospitalised cases ($H$ and $C$) 
	both continue to rise slowly.  We see that this strength of 
	lockdown stops the exponential growth.  
	The less severe lockdown simulated in Figure \ref{Fsns:7} 
	causes a brief reduction in the number of exposed cases; 
	however, the exponential growth is quickly resumed, 
	in the size of all sub-populations, 
	albeit with a slightly smaller growth rate. 
	The more severe lockdown simulated in Figure \ref{Fsns:8} 
	shows a sudden and sharp reduction in the number of exposed 
	($E$), followed by a steady exponential decrease.  
	The numbers of infected, quarantined and positive cases 
	is also seen to fall exponentially, whilst the cases 
	of hospitalised, intensive care, and self-isolated
	all plateau, as the total number of deaths slowly increases.  
	
	\begin{figure}[H] 
		\hspace*{-0.60in}
		\centering
		\includegraphics[scale=0.28]{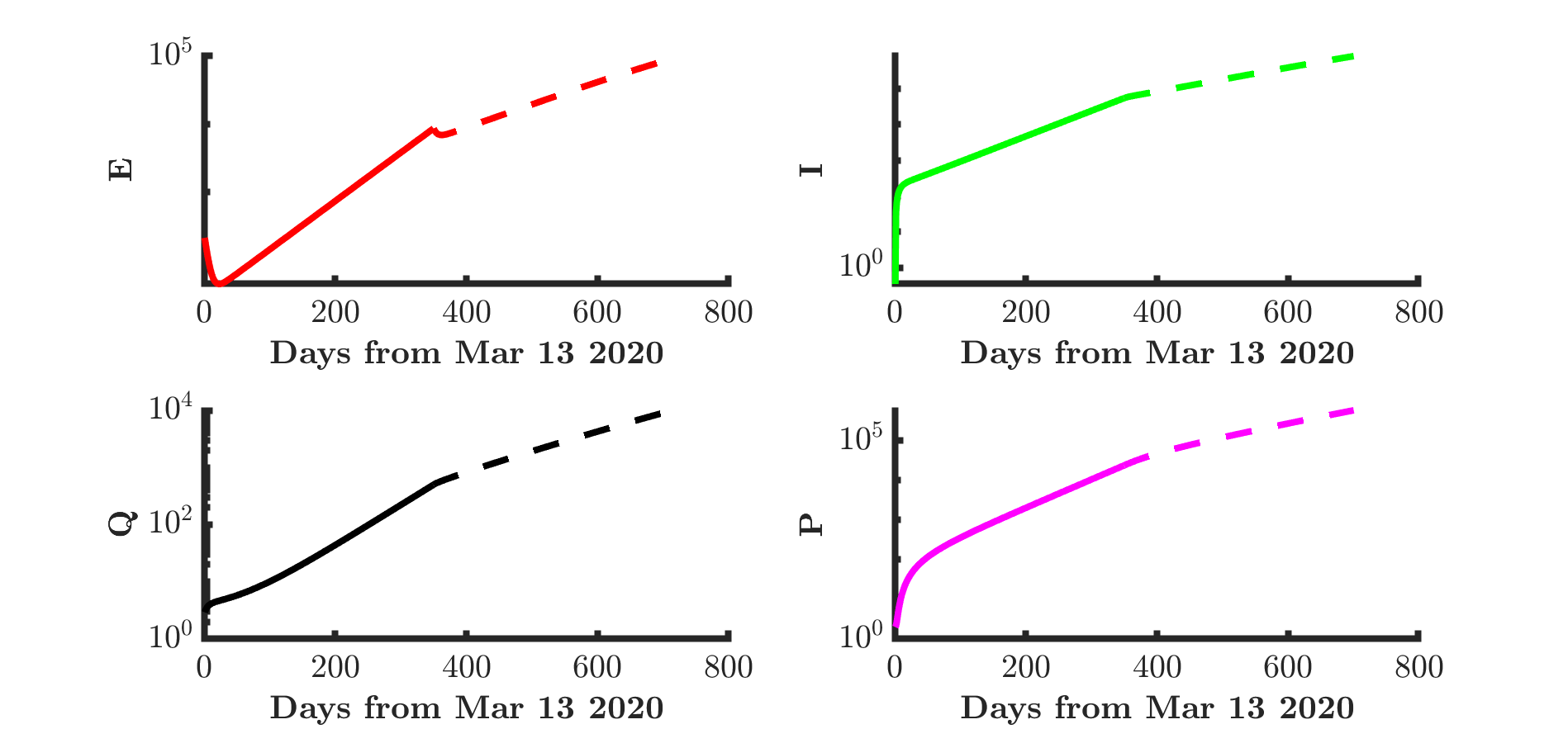}\\
		\hspace*{-0.6in}
		\includegraphics[scale=0.28]{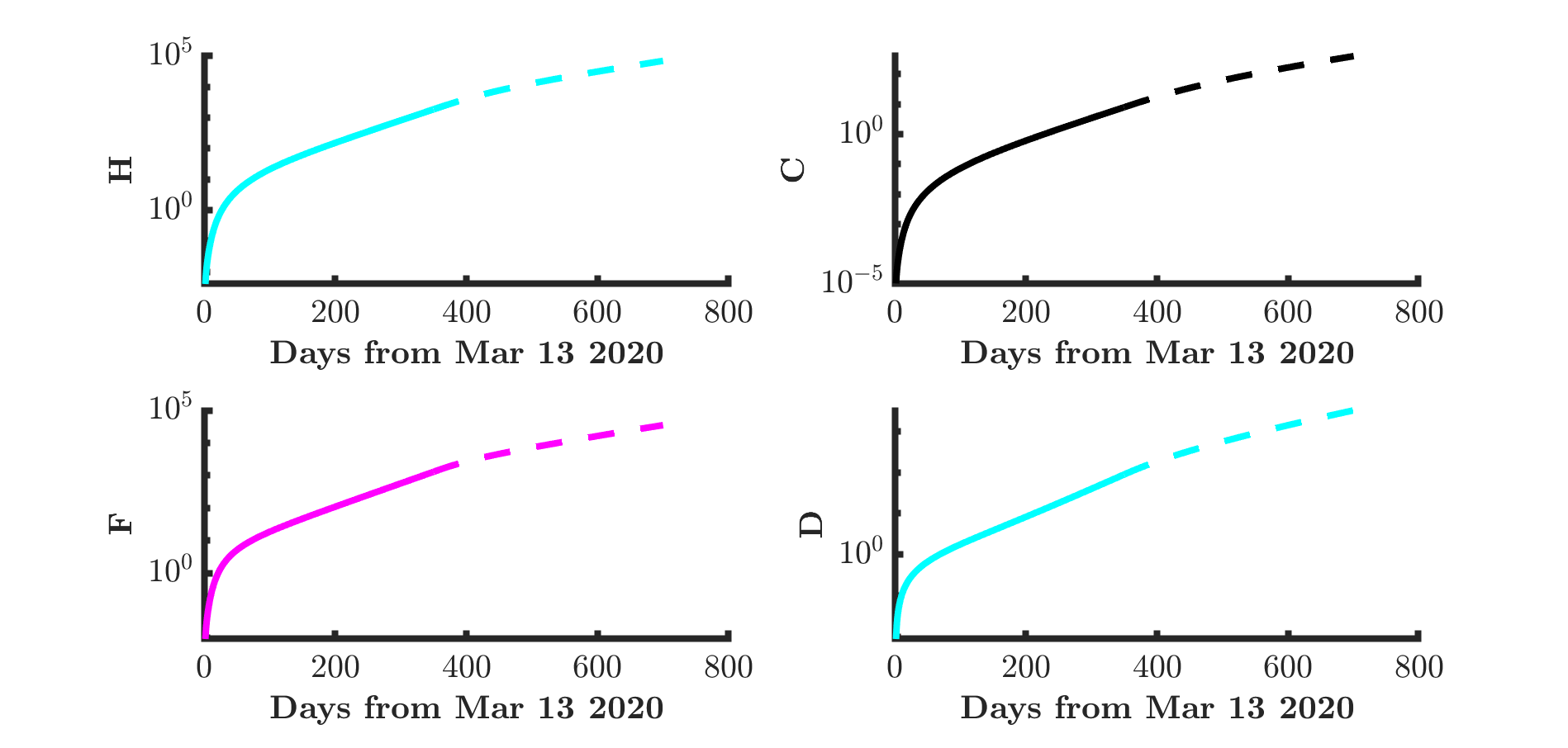}
		\caption{\textit{One-year simulation dynamics of CoVCom9 model from March 13 2020 when there is a 34\% reduction in $\varphi$, that is, to $\varphi=0.016$. Again, $E$, $I$, $Q$, $P$, $H$, $C$, $F$ and $D$ are respectively exposed, infectious, quarantine suspected-expose, confirmed-positive, hospitalised at ordinary ward, hospitalised at intensive care unit, and deaths with the vertical axis on a log-scale. The effect on $\mathcal{R}_{0}$ is a change from 3.110 to 2.053}.}
		\label{Fsns:7}
	\end{figure}
	
	It should be noted that these simulations are only a crude model 
	of the effects of lockdown, in reality a lockdown could cause 
	changes to other parameters, particularly $\alpha_1$, $\alpha_2$, 
	$\alpha_3$, $\beta_1$, $\beta_2$, $\beta_3$, in the formula 
	(\ref{Eqs2:2}) for the spread of the disease. We leave the topic 
	of more detailed models of the effects of lockdown for future work. 
	
	\begin{figure}[H] 
		\hspace*{-0.60in}
		\centering
		\includegraphics[scale=0.28]{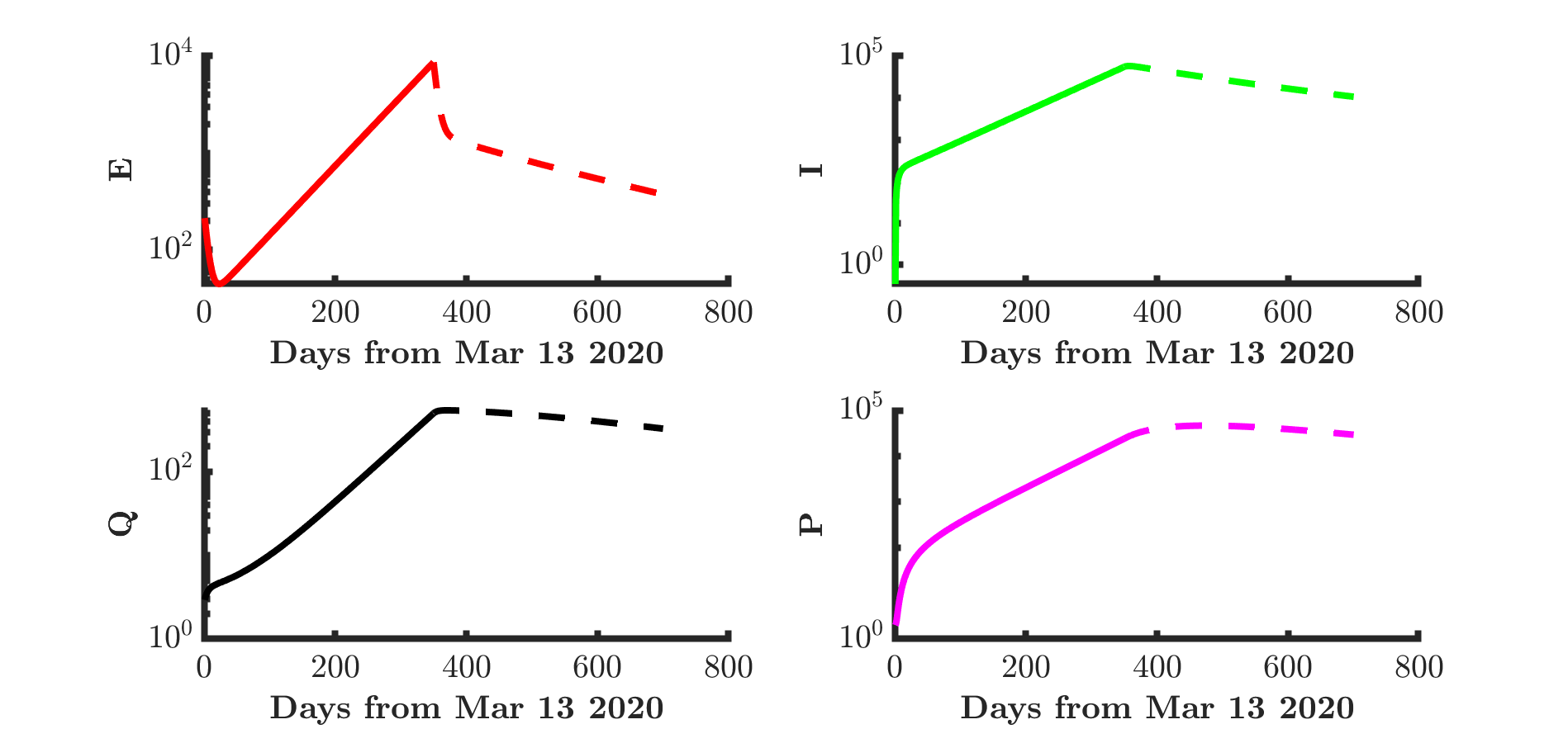}\\
		\hspace*{-0.6in}
		\includegraphics[scale=0.28]{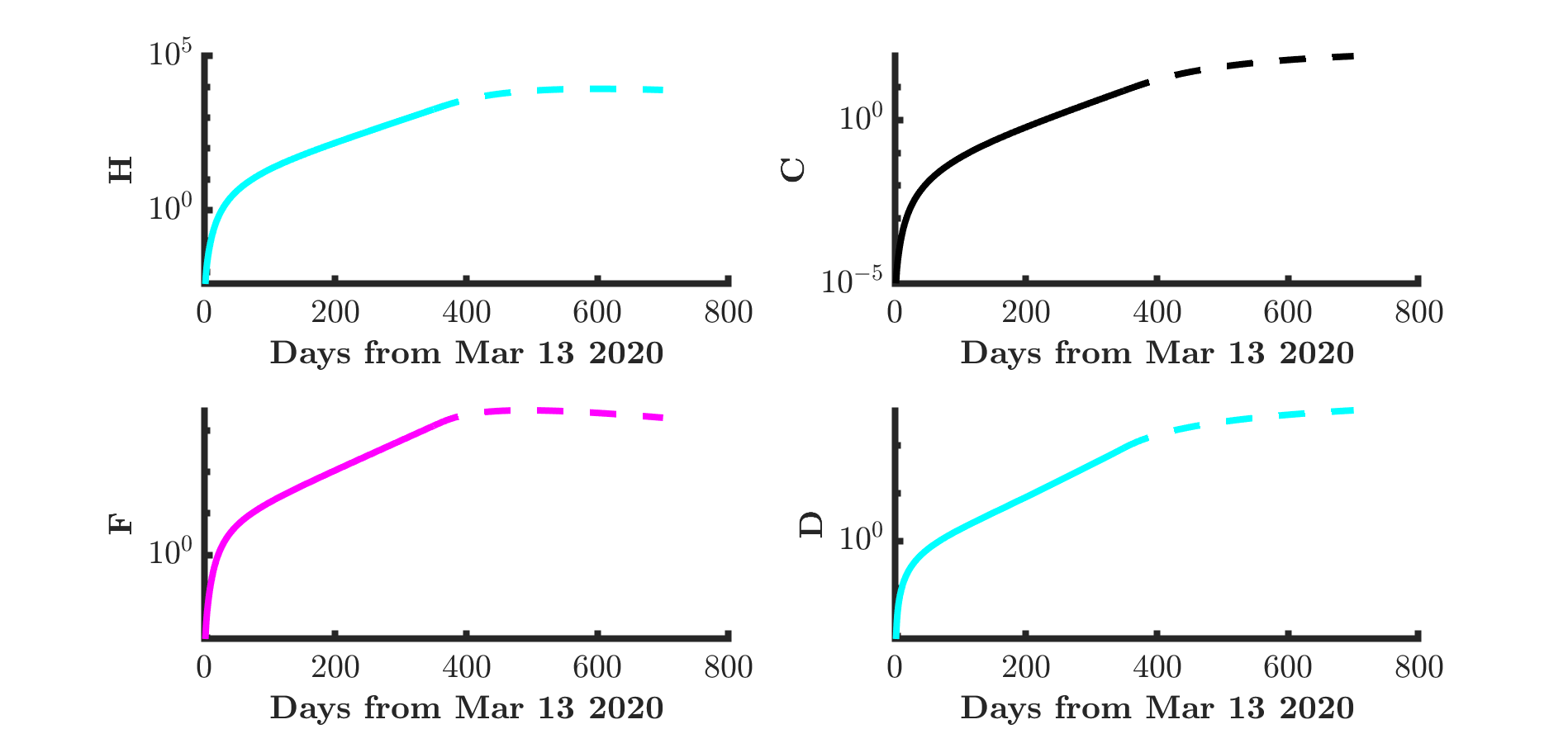}
		\caption{\textit{One-year simulation dynamics of CoVCom9 model from March 13 2020 when there is a 84\% reduction in $\varphi$, that it, to $\varphi=0.004$. As above, $E$, $I$, $Q$, $P$, $H$, $C$, $F$ and $D$ are respectively exposed, infectious, quarantine suspected-expose, confirmed-positive, hospitalised at ordinary ward, hospitalised at intensive care unit, and deaths with the vertical axis on a log-scale. The effect on $\mathcal{R}_{0}$ is a change from 3.110 to 0.498}.}
		\label{Fsns:8}
	\end{figure}
	
	
	\section{Discussion and conclusions}
	\label{S:6}
	
	We have developed a mathematical model (CoVCom9) in the form of a system of coupled ordinary differential equations to describe SARS-CoV-2 transmission dynamics in Ghana.  This categorises every member of the population into one of 9 classes, including various classes {\textit{well-defined and measurable classes}}, such as those who have tested positive for SARS-Cov-2 and are hospitalised (ordinary wards/intensive care), quarantined, etc, as well as {\textit{unmeasurable but clinically important classes}}, such as those who have been exposed to the virus, those who are infectious but not yet tested positive.  We investigated the epidemiological well-posedness of the CoVCom9 model, shown that solutions remain positive, and analysed the stability of the equilibrium solution. Using a candidate Lyapunov function, we have shown that the disease-free equilibrium is globally asymptotically stable when the basic reproduction number is $\mathcal{R}_{0}<1$.
	
	Using the reported data \citep{ritchie2020coronavirus} from March 13, 2020, to August 10, 2020, for both confirmed-positive cases and deaths of SARS-CoV-2 disease, we have parameterised the CoVCom9 model, with other parameters being estimated based from the literature. During the parameter estimation exercise, we used system exploratory analysis (SEA) to find practical parameter spaces. The estimated parameter values provided best fits that are in good agreement with both reported confirmed-positive cases and deaths. Also, the results point that on March 13, 2020, while two individuals are confirmed-positive, approximately 213, 3, and 1 persons were respectively exposed, quarantine suspected and infectious.
	
	We have used Latin Hypercube Sampling-Rank Correlation Coefficient (LHS-PRCC) to investigate the uncertainty and sensitivity of the reproduction number $\mathcal{R}_0$.  The results derived are of significant epidemiological value in SARS-CoV-2 control. We estimate that over the period, March-August 2020, the average basic reproduction number for Ghana was $\mathcal{R}_{0}=3.110$, which has the 95\% confidence percentile interval (2.042 - 3.240, in approximate centre of this interval is the mean value of 2.623). From Figure \ref{Fs5a:2}, we note that $\mathcal{R}_{0}$ is most sensitive to six model parameters ($\varphi$, $\alpha_{2}$, $\alpha_{3}$, $\epsilon_{1}$, $\gamma_{1}$, and $\upsilon_{1}$ whose effects are detailed in Table \ref{Ts2a:2}).
	
	The proposed CoVCom9 model is a result of our effort to gain insight into the vital features of SARS-CoV-2 transmission dynamics in Ghana. Future work will be focused on extending the model to account for inflow into other classes due to opening of Ghana's borders. Further, we will consider time-dependent optimal control intervention strategies to gain insight into the best strategy for Ghana. Other extensions include the time-dependent force of infection and the maximum capacity of intensive care units.

	
	\section*{Declaration of competing interest}
	
	The authors declare that they have no known competing financial interests or personal relationships that could have appeared to influence the study reported in this paper.
	
	\section*{Acknowledgements}
	
	The authors (EA, JADW and RLG) are thankful for funding provided by the Leverhulme Trust Doctoral Scholarship (DA214-024), Modelling and Analytics for a Sustainable Society (MASS), and to the University of Nottingham.

	%
	
	\appendix
	\section{Appendix}
	\label{S:7}
	
	\setcounter{equation}{0}
	\setcounter{figure}{0}
	\setcounter{table}{0}
	\renewcommand{\thefigure}{\Alph{section}.\arabic{figure}}
	\renewcommand{\thetable}{\Alph{section}.\arabic{table}}
	\renewcommand\theequation{\Alph{section}.\arabic{equation}}

	In Table \ref{Ts7:2} we list the parameter values 
	used in the simulations presented in Section \ref{S:4}. 
	
	Figures \ref{Fsns:A1}, \ref{Fsns:A2}, \ref{Fsns:A3} show our predictions 
	for how the subpopulation sizes in the model would have evolved over time 
	if a lockdown had been imposed as soon as the first cases entered Ghana. 
	These predictions are obtained by keeping all parameters at the same values 
	as in the main model, and reducing $\varphi$ to the values used 
	in Section \ref{S:ns}. 
	These graphs should be compared with Figure \ref{Fs4a:2}. 
	In Figure \ref{Fsns:A1} we use $\varphi=0.008$, 
	which is chosen to make our estimate of $\mathcal{R}_0=1$. 
	We see that this has the effect of bringing the pandemic under 
	some sort of control, but only over an extremely long timescale. 
	In Figure \ref{Fsns:A2} we simulate a partial lockdown, that is, 
	reducing $\varphi$ to 0.016 - which is the midpoint of the standard 
	value $\varphi=0.02495$ and that required to reduce $\mathcal{R}_0$ to 1.  
	We see that epidemic still grows, but at a slower rate than with no lockdown. 
	Finally, in Figure \ref{Fsns:A3}, we consider the effect of 
	a much more severe lockdown, where $\varphi$ is reduced to half 
	that needed for $\mathcal{R}_0=1$, that is $\varphi=0.004$.  This suggests 
	that the epidemic can be controlled and eliminated within a year. 
	
	\begin{table}[H] 
		\hspace*{0.05in}
		\centering
		\begin{minipage}{0.9\textwidth}
			\fontsize{8}{10}\selectfont
			\caption{Estimated initial values model variables and parameters for the system (\ref{Eqs2:2}.)}
			\label{Ts7:2}
			\centering
			\begin{tabular}{p{1cm} p{3cm} p{3cm}}
				\hline \hline
				\multicolumn{1}{l}{Parameters}
				& \multicolumn{1}{l}{Min}
				& \multicolumn{1}{l}{Max}\\
				\hline
				$ \alpha_{1} $
				& 0.245545
				& 1\\
				$ \varphi $
				& 0.012654
				& 0.050619\\
				$ \alpha_{2} $
				& 0.210646
				& 0.842586\\
				$ \alpha_{3} $
				& 0.020245
				& 0.080978\\
				$ \beta_{1} $
				& 0.089411
				& 0.357644\\
				$ \beta_{2} $
				& 0.060902
				& 0.243607\\
				$ \beta_{3} $
				& 0.169305
				& 0.677222\\
				$ \epsilon_{1} $
				& 1/7
				& 1/3 (a)\\
				$ \epsilon_{2} $
				& 0.000873
				& 0.003492\\
				$ \gamma_{1} $
				& 0.008402
				& 0.033610\\
				$ \gamma_{2} $
				& 8.662695$\times 10^{-5}$
				& 0.000347\\
				$ \upsilon_{1} $
				& 0.000379
				& 0.001517\\
				$ \upsilon_{2} $
				& 6.863871$\times 10^{-10}$
				& 2.745548$\times 10^{-9}$\\
				$ \rho_{1} $
				& 0.000547
				& 0.002188\\
				$ \rho_{2} $
				& 2.477852$\times 10^{-6}$
				& 9.911408$\times 10^{-6}$\\
				$ \rho_{3} $
				& 0.001391
				& 0.005565\\
				$ \kappa_{1} $
				& 0.005728
				& 0.022913\\
				$ \kappa_{2} $
				& 2.874618$\times 10^{-6}$
				& 1.149847$\times 10^{-5}$\\
				$ \kappa_{3} $
				& 1.488217$\times 10^{-5}$
				& 5.952870$\times 10^{-5}$\\
				$ \delta_{1} $
				& 1/23
				& 1/11 (b)\\
				$ \delta_{2} $
				& 2.862673$\times 10^{-9}$
				& 1.145069$\times 10^{-8}$\\
				$ \eta $
				& 4.861253$\times 10^{-5}$
				& 0.000194\\
				$ \tau $
				& 7.740808$\times 10^{-9}$
				& 3.096323$\times 10^{-8}$\\
				$ d_{1} $
				& 3.892588$\times 10^{-10}$
				& 1.557035$\times 10^{-9}$\\
				$ d_{2} $
				& 6.221547$\times 10^{-14}$
				& 2.488619$\times 10^{-13}$\\
				$ d_{3} $
				& 0.000996
				& 0.003985\\
				$ d_{4} $
				& 2.996889$\times 10^{-10}$
				& 1.198756$\times 10^{-9}$\\
				$ d_{5} $
				& 6.838531$\times 10^{-13}$
				& 2.735412$\times 10^{-12}$\\
				$ d_{6} $
				& 3.412952$\times 10^{-14}$
				& 1.365181$\times 10^{-13}$\\
				$ d_{7} $
				& 1.200917$\times 10^{-12}$
				& 4.803667$\times 10^{-12}$\\
				$ E $
				& 40
				& 300\\
				$ I $
				& 0
				& 10\\
				$ Q $
				& 1
				& 70\\
				\hline \hline
			\end{tabular}
			\\ (a) denotes literature values and (b) denotes assumed value
		\end{minipage}
	\end{table}
	
	\begin{figure}[H] 
		\hspace*{-0.60in}
		\centering
		\includegraphics[scale=0.28]{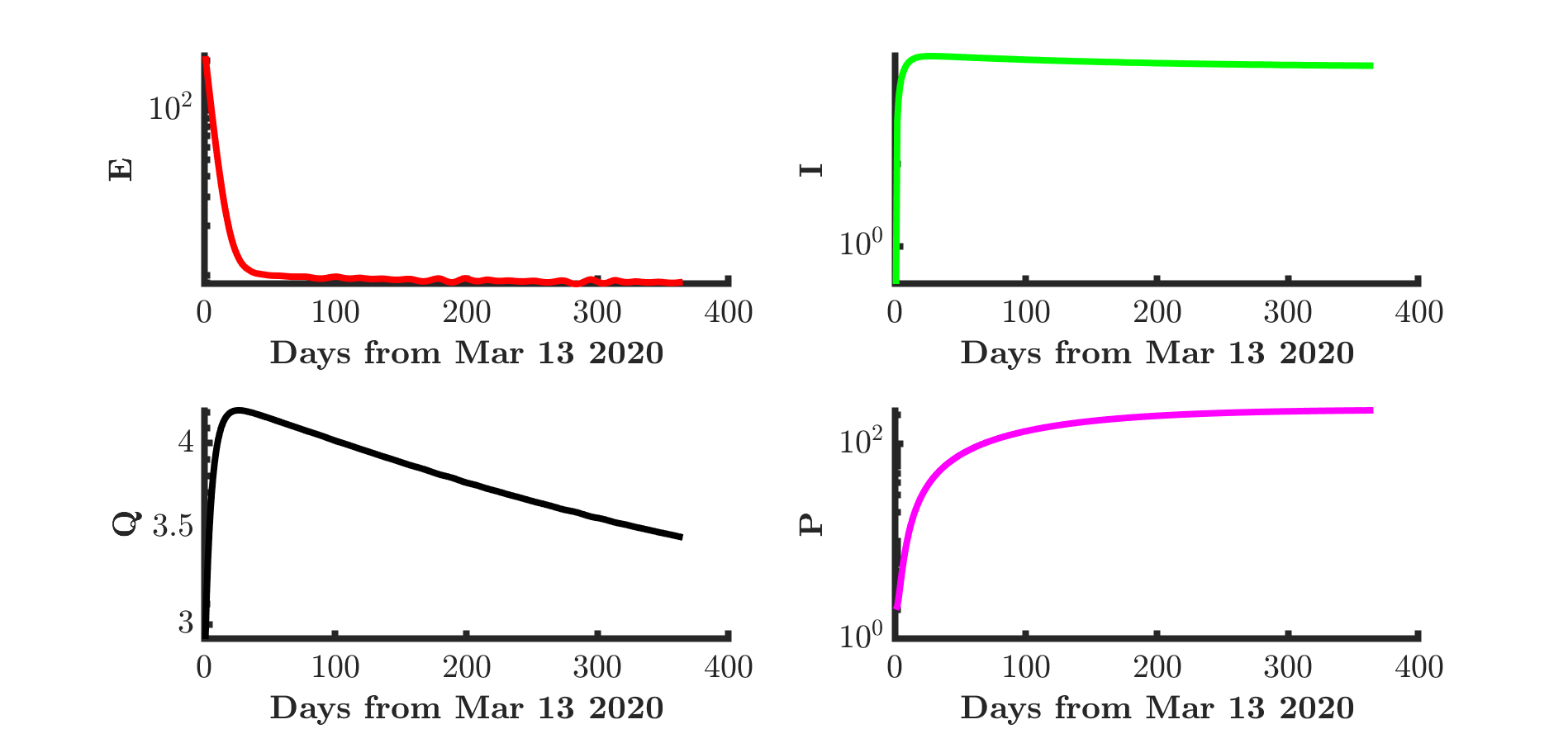}\\
		\hspace*{-0.6in}
		\includegraphics[scale=0.28]{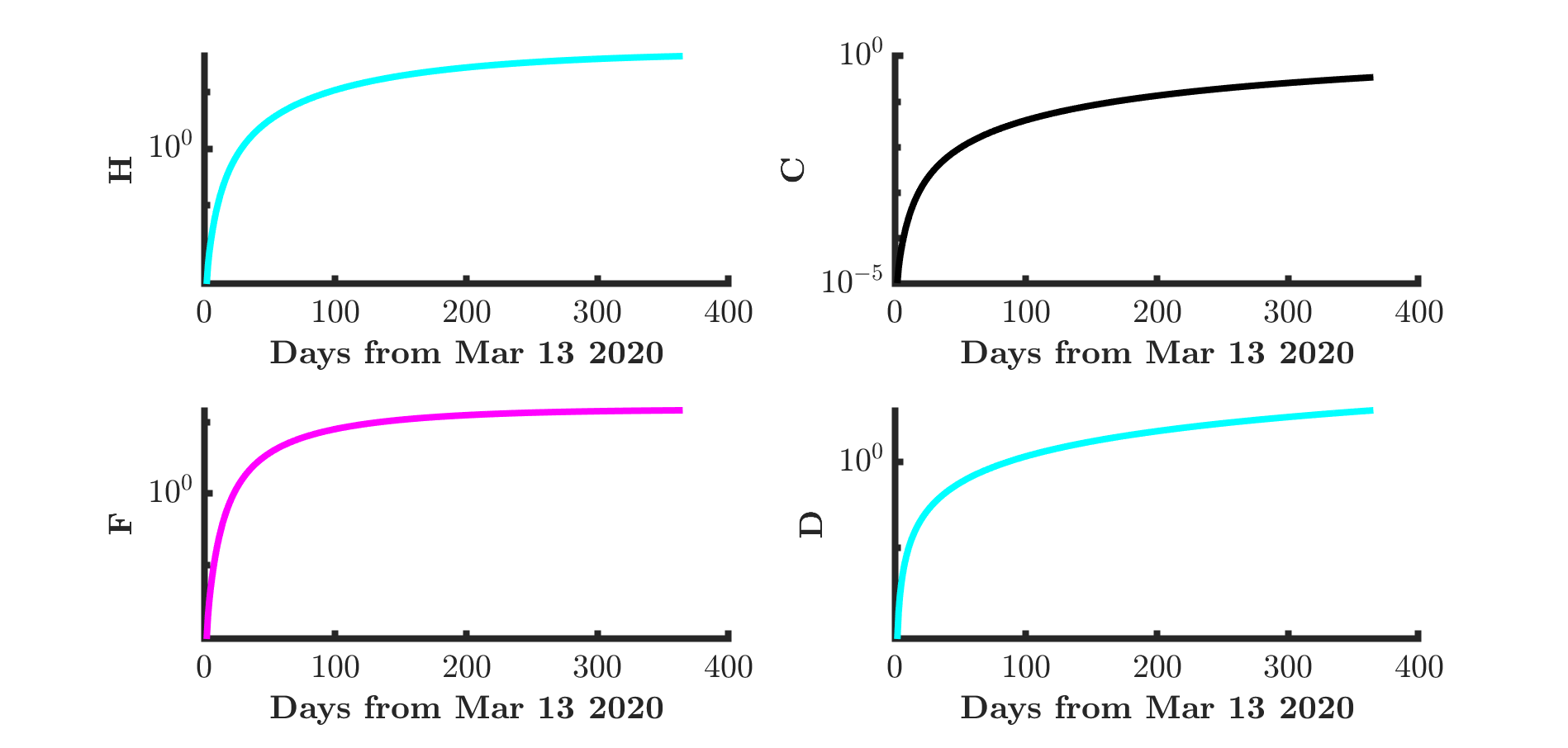}
		\caption{\textit{One-year simulation dynamics of CoVCom9 model from March 13 2020 when there is a 68\% reduction in $\varphi$ to $\varphi=0.008$. Here $E$, $I$, $Q$, $P$, $H$, $C$, $F$ and $D$ are respectively exposed, infectious, quarantine suspected-expose, confirmed-positive, hospitalised at ordinary ward, hospitalised at intensive care unit, and deaths with the vertical axis on a log-scale. The effect on $\mathcal{R}_{0}$ is a change from 3.110 to 0.995}.}
		\label{Fsns:A1}
	\end{figure}
	
	\begin{figure}[H] 
		\hspace*{-0.60in}
		\centering
		\includegraphics[scale=0.28]{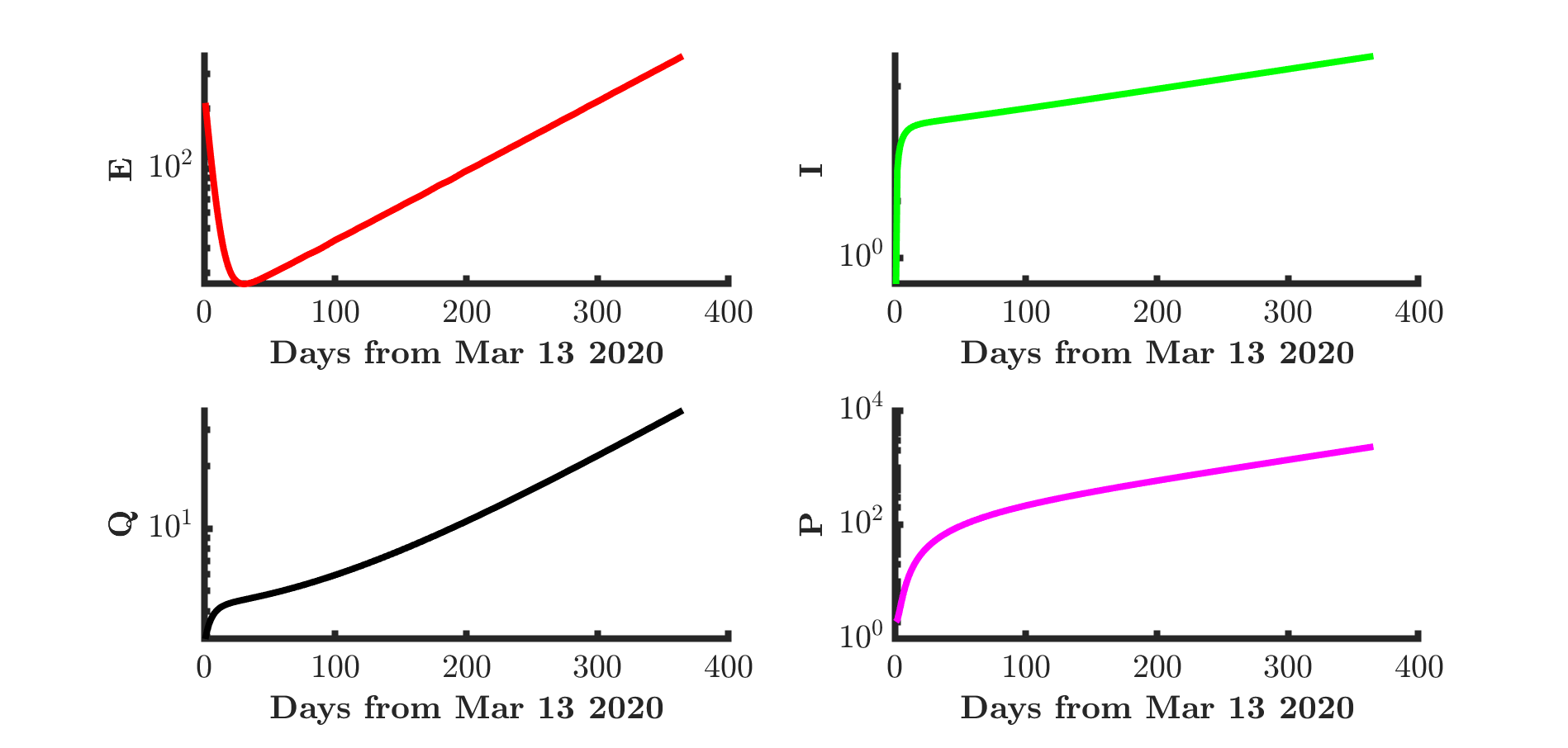}\\
		\hspace*{-0.6in}
		\includegraphics[scale=0.28]{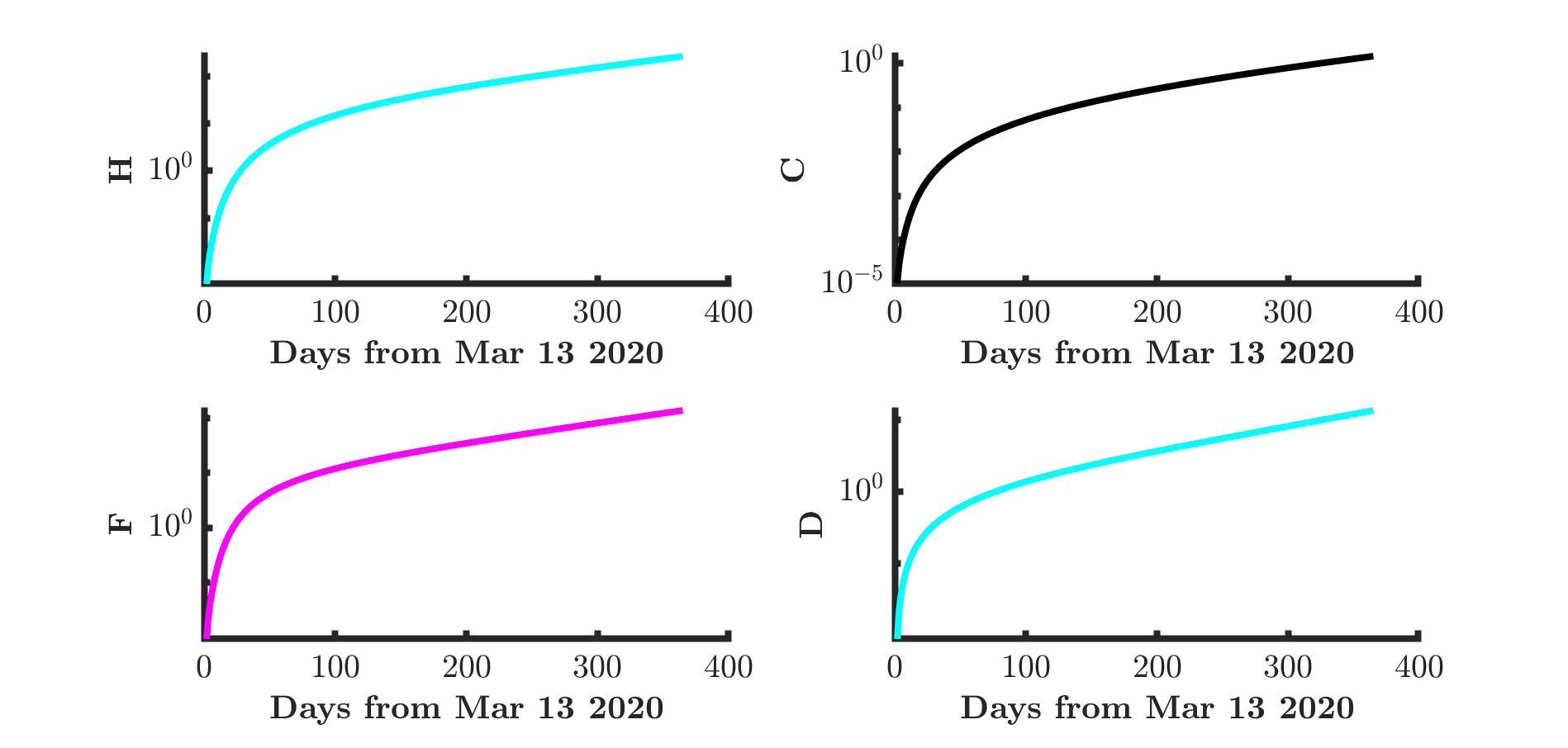}
		\caption{\textit{One-year simulation dynamics of CoVCom9 model from March 13 2020 when there is a 34\% reduction in $\varphi$ to $\varphi=0.016$. Here $E$, $I$, $Q$, $P$, $H$, $C$, $F$ and $D$ are respectively exposed, infectious, quarantine suspected-expose, confirmed-positive, hospitalised at ordinary ward, hospitalised at intensive care unit, and deaths with the vertical axis on a log-scale. The effect on $\mathcal{R}_{0}$ is a change from 3.110 to 2.053}.}
		\label{Fsns:A2}
	\end{figure}
	
	\begin{figure}[H] 
		\hspace*{-0.60in}
		\centering
		\includegraphics[scale=0.28]{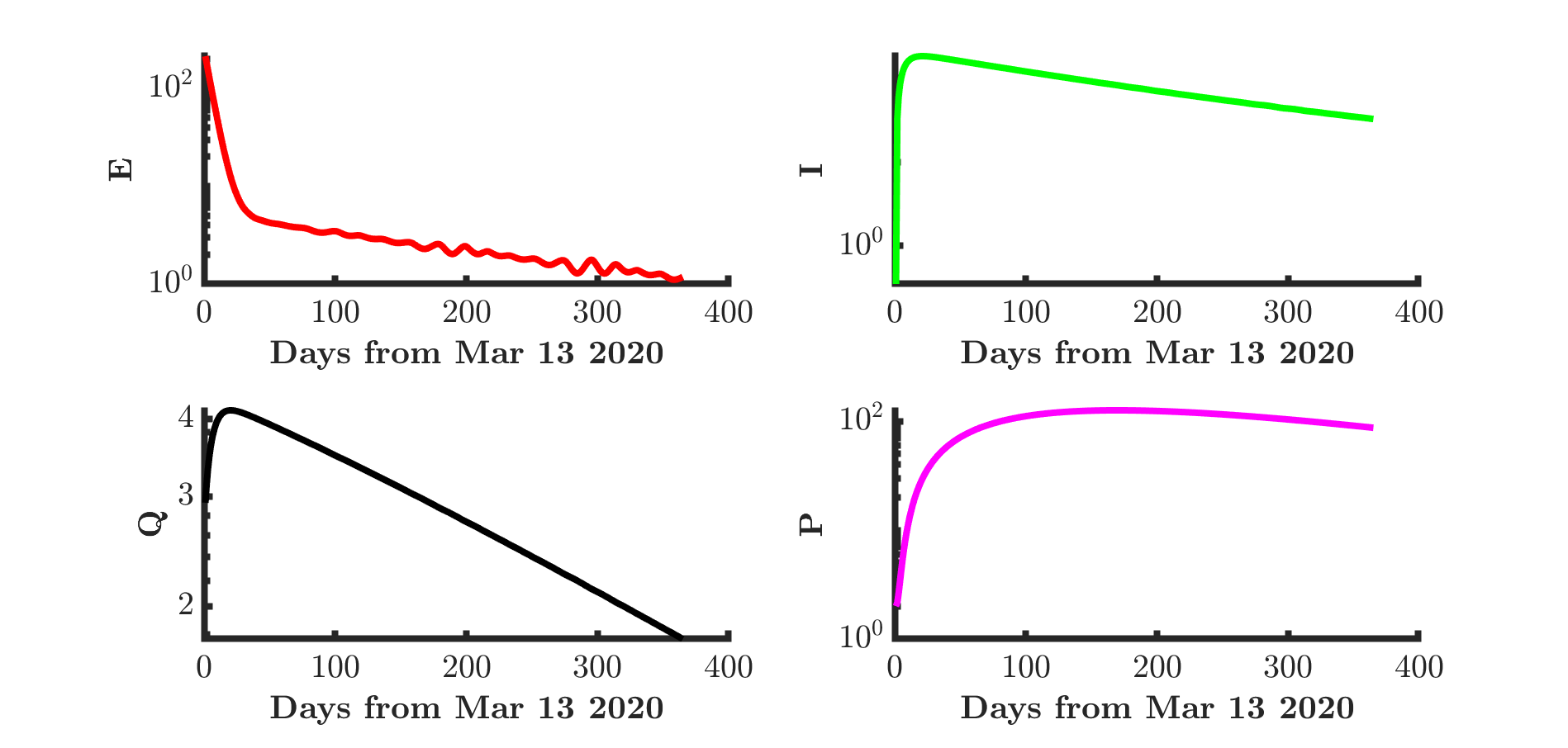}\\
		\hspace*{-0.6in}
		\includegraphics[scale=0.28]{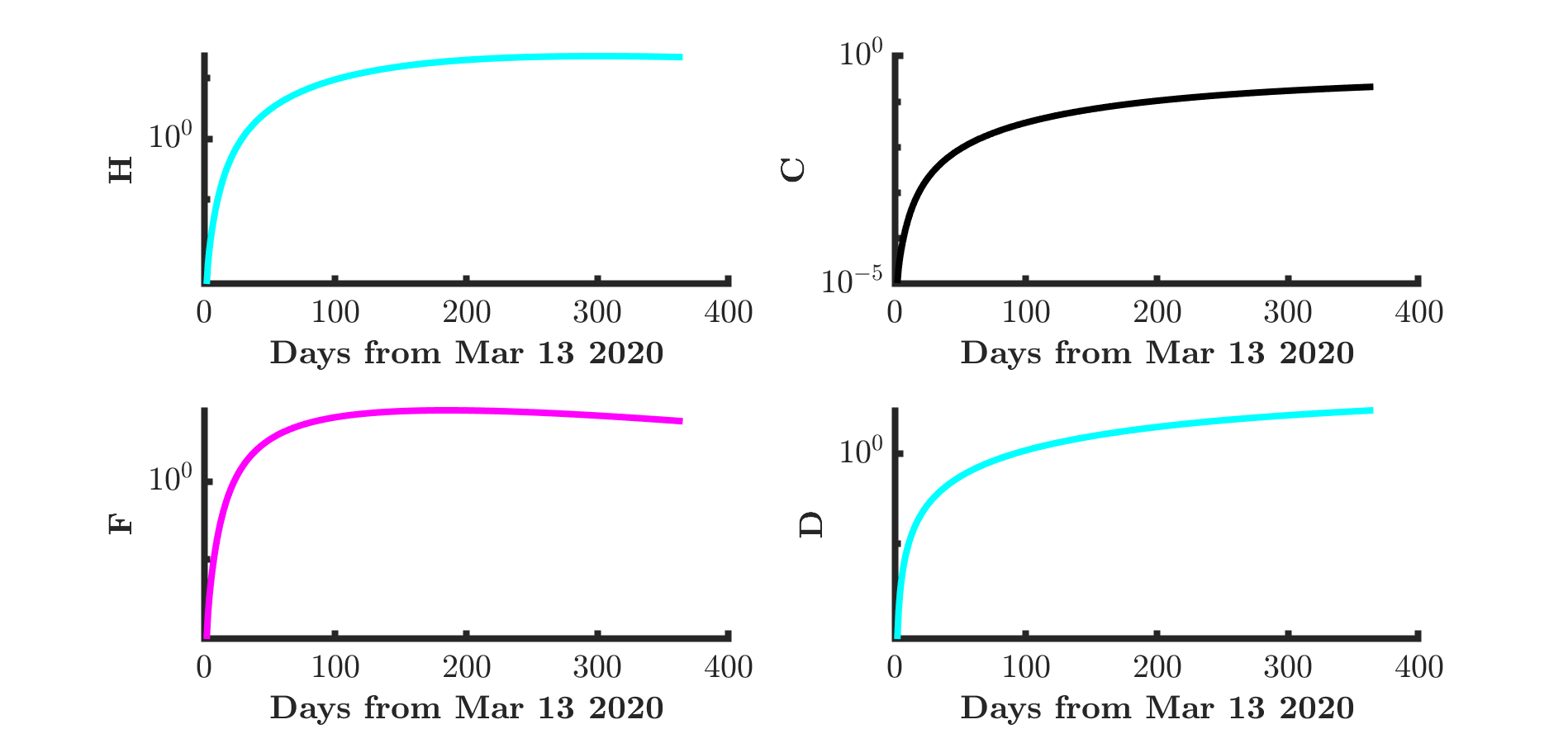}
		\caption{\textit{One-year simulation dynamics of CoVCom9 model from March 13 2020 when there is a 84\% reduction in $\varphi$, to $\varphi=0.004$. Here $E$, $I$, $Q$, $P$, $H$, $C$, $F$ and $D$ are respectively exposed, infectious, quarantine suspected-expose, confirmed-positive, hospitalised at ordinary ward, hospitalised at intensive care unit, and deaths with the vertical axis on a log-scale. The effect on $\mathcal{R}_{0}$ is a change from 3.110 to 0.498}.}
		\label{Fsns:A3}
	\end{figure}
	
	\newpage

	\bibliographystyle{elsarticle-harv}
	\bibliography{EA_covid19_rf}

\begin{thebibliography}{49}
\expandafter\ifx\csname natexlab\endcsname\relax\def\natexlab#1{#1}\fi
\providecommand{\url}[1]{\texttt{#1}}
\providecommand{\href}[2]{#2}
\providecommand{\path}[1]{#1}
\providecommand{\DOIprefix}{doi:}
\providecommand{\ArXivprefix}{arXiv:}
\providecommand{\URLprefix}{URL: }
\providecommand{\Pubmedprefix}{pmid:}
\providecommand{\doi}[1]{\href{http://dx.doi.org/#1}{\path{#1}}}
\providecommand{\Pubmed}[1]{\href{pmid:#1}{\path{#1}}}
\providecommand{\bibinfo}[2]{#2}
\ifx\xfnm\relax \def\xfnm[#1]{\unskip,\space#1}\fi
\bibitem[{Abou-Ismail(2020)}]{abou2020compartmental}
\bibinfo{author}{Abou-Ismail, A.}, \bibinfo{year}{2020}.
\newblock \bibinfo{title}{Compartmental models of the {COVID}-19 pandemic for
  physicians and physician-scientists}.
\newblock \bibinfo{journal}{SN Comprehensive Clinical Medicine}
  \bibinfo{volume}{2}, \bibinfo{pages}{852--858}.
\bibitem[{Acheampong et~al.(2019)Acheampong, Dryden, Wattis, Twycross,
  Scrimshaw and Gomes}]{acheampong2019modelling}
\bibinfo{author}{Acheampong, E.}, \bibinfo{author}{Dryden, I.L.},
  \bibinfo{author}{Wattis, J.A.D.}, \bibinfo{author}{Twycross, J.},
  \bibinfo{author}{Scrimshaw, M.D.}, \bibinfo{author}{Gomes, R.L.},
  \bibinfo{year}{2019}.
\newblock \bibinfo{title}{Modelling emerging pollutants in wastewater
  treatment: A case study using the pharmaceutical 17$\alpha$-
  ethinylestradiol}.
\newblock \bibinfo{journal}{Computers \& Chemical Engineering}
  \bibinfo{volume}{128}, \bibinfo{pages}{477--487}.
\bibitem[{Ali et~al.(2020)Ali, Shah, Imran and Khan}]{ali2020role}
\bibinfo{author}{Ali, M.}, \bibinfo{author}{Shah, S.T.H.},
  \bibinfo{author}{Imran, M.}, \bibinfo{author}{Khan, A.},
  \bibinfo{year}{2020}.
\newblock \bibinfo{title}{The role of asymptomatic class, quarantine and
  isolation in the transmission of {COVID}-19}.
\newblock \bibinfo{journal}{Journal of Biological Dynamics}
  \bibinfo{volume}{14}, \bibinfo{pages}{389--408}.
\bibitem[{Anderson et~al.(2020)Anderson, Heesterbeek, Klinkenberg and
  Hollingsworth}]{anderson2020will}
\bibinfo{author}{Anderson, R.M.}, \bibinfo{author}{Heesterbeek, H.},
  \bibinfo{author}{Klinkenberg, D.}, \bibinfo{author}{Hollingsworth, T.D.},
  \bibinfo{year}{2020}.
\newblock \bibinfo{title}{How will country-based mitigation measures influence
  the course of the {COVID}-19 epidemic?}
\newblock \bibinfo{journal}{The Lancet} \bibinfo{volume}{395},
  \bibinfo{pages}{931--934}.
\bibitem[{Annas et~al.(2020)Annas, Pratama, Rifandi, Sanusi and
  Side}]{annas2020stability}
\bibinfo{author}{Annas, S.}, \bibinfo{author}{Pratama, M.I.},
  \bibinfo{author}{Rifandi, M.}, \bibinfo{author}{Sanusi, W.},
  \bibinfo{author}{Side, S.}, \bibinfo{year}{2020}.
\newblock \bibinfo{title}{Stability analysis and numerical simulation of {SEIR}
  model for pandemic {COVID}-19 spread in {I}ndonesia}.
\newblock \bibinfo{journal}{Chaos, Solitons \& Fractals} \bibinfo{volume}{139},
  \bibinfo{pages}{110072}.
\bibitem[{Asamoah et~al.(2020)Asamoah, Owusu, Jin, Oduro, Abidemi and
  Gyasi}]{asamoah2020global}
\bibinfo{author}{Asamoah, J.K.K.}, \bibinfo{author}{Owusu, M.A.},
  \bibinfo{author}{Jin, Z.}, \bibinfo{author}{Oduro, F.},
  \bibinfo{author}{Abidemi, A.}, \bibinfo{author}{Gyasi, E.O.},
  \bibinfo{year}{2020}.
\newblock \bibinfo{title}{Global stability and cost-effectiveness analysis of
  {COVID}-19 considering the impact of the environment: using data from
  {G}hana}.
\newblock \bibinfo{journal}{Chaos, Solitons \& Fractals} \bibinfo{volume}{140},
  \bibinfo{pages}{110103}.
\bibitem[{Bland and Altman(1996)}]{bland1996statistics}
\bibinfo{author}{Bland, J.M.}, \bibinfo{author}{Altman, D.G.},
  \bibinfo{year}{1996}.
\newblock \bibinfo{title}{Statistics notes: measurement error proportional to
  the mean}.
\newblock \bibinfo{journal}{British Medical Journal} \bibinfo{volume}{313},
  \bibinfo{pages}{106}.
\bibitem[{Blyuss and Kyrychko(2021)}]{blyuss2021effects}
\bibinfo{author}{Blyuss, K.}, \bibinfo{author}{Kyrychko, Y.},
  \bibinfo{year}{2021}.
\newblock \bibinfo{title}{Effects of latency and age structure on the dynamics
  and containment of {COVID}-19}.
\newblock \bibinfo{journal}{Journal of Theoretical Biology}
  \bibinfo{volume}{513}, \bibinfo{pages}{110587}.
\bibitem[{Brandi et~al.(2020)Brandi, Ceppitelli and
  Salvadori}]{brandi2020epidemic}
\bibinfo{author}{Brandi, P.}, \bibinfo{author}{Ceppitelli, R.},
  \bibinfo{author}{Salvadori, A.}, \bibinfo{year}{2020}.
\newblock \bibinfo{title}{Epidemic evolution models to the test of {COVID}-19}.
\newblock \bibinfo{journal}{Bollettino dell'Unione Matematica Italiana}
  \bibinfo{volume}{13}, \bibinfo{pages}{573--583}.
\bibitem[{Brauer et~al.(2008)Brauer, Driessche and Wu}]{brauer2008lecture}
\bibinfo{author}{Brauer, F.}, \bibinfo{author}{Driessche, P.d.},
  \bibinfo{author}{Wu, J.}, \bibinfo{year}{2008}.
\newblock \bibinfo{title}{Lecture Notes in Mathematical Epidemiology}.
\newblock \bibinfo{publisher}{Berlin, Germany: Springer}.
\bibitem[{Buonomo(2020)}]{buonomo2020effects}
\bibinfo{author}{Buonomo, B.}, \bibinfo{year}{2020}.
\newblock \bibinfo{title}{Effects of information-dependent vaccination behavior
  on coronavirus outbreak: insights from a {SIRI} model}.
\newblock \bibinfo{journal}{Ricerche di Matematica} \bibinfo{volume}{69},
  \bibinfo{pages}{483--499}.
\bibitem[{Carcione et~al.(2020)Carcione, Santos, Bagaini and
  Ba}]{carcione2020simulation}
\bibinfo{author}{Carcione, J.M.}, \bibinfo{author}{Santos, J.E.},
  \bibinfo{author}{Bagaini, C.}, \bibinfo{author}{Ba, J.},
  \bibinfo{year}{2020}.
\newblock \bibinfo{title}{A simulation of a {COVID}-19 epidemic based on a
  deterministic {SEIR} model}.
\newblock \bibinfo{journal}{Frontiers in Public Health} \bibinfo{volume}{8},
  \bibinfo{pages}{230}.
\bibitem[{Chen et~al.(2020)Chen, Li, Hao, Liu, Hu and
  Wang}]{chen2020introduction}
\bibinfo{author}{Chen, M.}, \bibinfo{author}{Li, M.}, \bibinfo{author}{Hao,
  Y.}, \bibinfo{author}{Liu, Z.}, \bibinfo{author}{Hu, L.},
  \bibinfo{author}{Wang, L.}, \bibinfo{year}{2020}.
\newblock \bibinfo{title}{The introduction of population migration to {SEIAR}
  for {COVID}-19 epidemic modeling with an efficient intervention strategy}.
\newblock \bibinfo{journal}{Information Fusion} \bibinfo{volume}{64},
  \bibinfo{pages}{252--258}.
\bibitem[{Chowell et~al.(2004)Chowell, Hengartner, Castillo-Chavez, Fenimore
  and Hyman}]{chowell2004basic}
\bibinfo{author}{Chowell, G.}, \bibinfo{author}{Hengartner, N.W.},
  \bibinfo{author}{Castillo-Chavez, C.}, \bibinfo{author}{Fenimore, P.W.},
  \bibinfo{author}{Hyman, J.M.}, \bibinfo{year}{2004}.
\newblock \bibinfo{title}{The basic reproductive number of {E}bola and the
  effects of public health measures: the cases of {C}ongo and {U}ganda}.
\newblock \bibinfo{journal}{Journal of Theoretical Biology}
  \bibinfo{volume}{229}, \bibinfo{pages}{119--126}.
\bibitem[{Chowell and Hyman(2016)}]{chowell2016mathematical}
\bibinfo{author}{Chowell, G.}, \bibinfo{author}{Hyman, J.M.},
  \bibinfo{year}{2016}.
\newblock \bibinfo{title}{Mathematical and statistical modeling for emerging
  and re-emerging infectious diseases}.
\newblock \bibinfo{publisher}{Springer, Cham}.
\bibitem[{Chowell et~al.(2009)Chowell, Hyman, Bettencourt and
  Castillo-Chavez}]{chowell2009mathematical}
\bibinfo{author}{Chowell, G.}, \bibinfo{author}{Hyman, J.M.},
  \bibinfo{author}{Bettencourt, L.M.}, \bibinfo{author}{Castillo-Chavez, C.},
  \bibinfo{year}{2009}.
\newblock \bibinfo{title}{Mathematical and statistical estimation approaches in
  epidemiology}.
\newblock \bibinfo{publisher}{Springer}.
\bibitem[{Cside18(2018)}]{sms2018}
\bibinfo{author}{Cside18}, \bibinfo{year}{2018}.
\newblock \bibinfo{title}{Heuristic identification of practical parameter
  spaces for ode estimation}.
\newblock
  \bibinfo{howpublished}{\url{https://www.gla.ac.uk/schools/mathematicsstatistics/events/conferences/cside2018/oralpresentations/}}.
\newblock \bibinfo{note}{Accessed: 2020-09-08}.
\bibitem[{Diekmann et~al.(2010)Diekmann, Heesterbeek and
  Roberts}]{Diekmann2010construction}
\bibinfo{author}{Diekmann, O.}, \bibinfo{author}{Heesterbeek, J.A.P.},
  \bibinfo{author}{Roberts, M.G.}, \bibinfo{year}{2010}.
\newblock \bibinfo{title}{The construction of next generation matrices for
  compartmental epidemic models}.
\newblock \bibinfo{journal}{Journal of Royal Society Interface}
  \bibinfo{volume}{7}, \bibinfo{pages}{873--885}.
\bibitem[{Dietz(1993)}]{dietz1993estimation}
\bibinfo{author}{Dietz, K.}, \bibinfo{year}{1993}.
\newblock \bibinfo{title}{The estimation of the basic reproduction number for
  infectious diseases}.
\newblock \bibinfo{journal}{Statistical Methods in Medical Research}
  \bibinfo{volume}{2}, \bibinfo{pages}{23--41}.
\bibitem[{Van~den Driessche and
  Watmough(2002)}]{vandendriessche2002reproduction}
\bibinfo{author}{Van~den Driessche, P.}, \bibinfo{author}{Watmough, J.},
  \bibinfo{year}{2002}.
\newblock \bibinfo{title}{Reproduction numbers and sub-threshold endemic
  equilibria for compartmental models of disease transmission}.
\newblock \bibinfo{journal}{Mathematical Bioscience} \bibinfo{volume}{180},
  \bibinfo{pages}{29--48}.
\bibitem[{Gevertz et~al.(2020)Gevertz, Greene, Sanchez-Tapia and
  Sontag}]{gevertz2020novel}
\bibinfo{author}{Gevertz, J.L.}, \bibinfo{author}{Greene, J.M.},
  \bibinfo{author}{Sanchez-Tapia, C.H.}, \bibinfo{author}{Sontag, E.D.},
  \bibinfo{year}{2020}.
\newblock \bibinfo{title}{A novel {COVID}-19 epidemiological model with
  explicit susceptible and asymptomatic isolation compartments reveals
  unexpected consequences of timing social distancing}.
\newblock \bibinfo{journal}{Journal of Theoretical Biology}
  \bibinfo{volume}{510}, \bibinfo{pages}{110539}.
\bibitem[{Giordano et~al.(2020)Giordano, Blanchini, Bruno, Colaneri,
  Di~Filippo, Di~Matteo and Colaneri}]{giordano2020modelling}
\bibinfo{author}{Giordano, G.}, \bibinfo{author}{Blanchini, F.},
  \bibinfo{author}{Bruno, R.}, \bibinfo{author}{Colaneri, P.},
  \bibinfo{author}{Di~Filippo, A.}, \bibinfo{author}{Di~Matteo, A.},
  \bibinfo{author}{Colaneri, M.}, \bibinfo{year}{2020}.
\newblock \bibinfo{title}{Modelling the {COVID}-19 epidemic and implementation
  of population-wide interventions in {I}taly}.
\newblock \bibinfo{journal}{Nature Medicine} \bibinfo{volume}{26},
  \bibinfo{pages}{855--860}.
\bibitem[{G{\"o}tz and Heidrich(2020)}]{gotz2020early}
\bibinfo{author}{G{\"o}tz, T.}, \bibinfo{author}{Heidrich, P.},
  \bibinfo{year}{2020}.
\newblock \bibinfo{title}{Early stage {COVID}-19 disease dynamics in {G}ermany:
  models and parameter identification}.
\newblock \bibinfo{journal}{Journal of Mathematics in Industry}
  \bibinfo{volume}{10}, \bibinfo{pages}{1--13}.
\bibitem[{(GSS)(2020)}]{gss2020}
\bibinfo{author}{(GSS), G.S.S.}, \bibinfo{year}{2020}.
\newblock \bibinfo{title}{Statistics infobank}.
\newblock
  \bibinfo{howpublished}{\url{https://statsghana.gov.gh/infobankdetails.php?infobank=NzM4OTU5NzcwLjU0Mg==/infodesk/5ss09qq33r}}.
\newblock \bibinfo{note}{Accessed: 2020-09-08}.
\bibitem[{Hethcote(2000)}]{hethcote2000mathematics}
\bibinfo{author}{Hethcote, H.W.}, \bibinfo{year}{2000}.
\newblock \bibinfo{title}{The mathematics of infectious diseases}.
\newblock \bibinfo{journal}{SIAM Review} \bibinfo{volume}{42},
  \bibinfo{pages}{599--653}.
\bibitem[{Iman and Conover(1982)}]{iman1982distribution}
\bibinfo{author}{Iman, R.L.}, \bibinfo{author}{Conover, W.J.},
  \bibinfo{year}{1982}.
\newblock \bibinfo{title}{A distribution-free approach to inducing rank
  correlation among input variables}.
\newblock \bibinfo{journal}{Communications in Statistics-Simulation and
  Computation} \bibinfo{volume}{11}, \bibinfo{pages}{311--334}.
\bibitem[{Ivorra et~al.(2020)Ivorra, Ferr{\'a}ndez, Vela-P{\'e}rez and
  Ramos}]{ivorra2020mathematical}
\bibinfo{author}{Ivorra, B.}, \bibinfo{author}{Ferr{\'a}ndez, M.R.},
  \bibinfo{author}{Vela-P{\'e}rez, M.}, \bibinfo{author}{Ramos, A.},
  \bibinfo{year}{2020}.
\newblock \bibinfo{title}{Mathematical modeling of the spread of the
  coronavirus disease 2019 ({COVID}-19) taking into account the undetected
  infections. the case of {C}hina}.
\newblock \bibinfo{journal}{Communications in Nonlinear Science and Numerical
  Simulation} \bibinfo{volume}{88}, \bibinfo{pages}{105303}.
\bibitem[{Khan et~al.(2015)Khan, Naveed, Dur-e Ahmad and
  Imran}]{khan2015estimating}
\bibinfo{author}{Khan, A.}, \bibinfo{author}{Naveed, M.},
  \bibinfo{author}{Dur-e Ahmad, M.}, \bibinfo{author}{Imran, M.},
  \bibinfo{year}{2015}.
\newblock \bibinfo{title}{Estimating the basic reproductive ratio for the
  {E}bola outbreak in {L}iberia and {S}ierra {L}eone}.
\newblock \bibinfo{journal}{Infectious Diseases of Poverty}
  \bibinfo{volume}{4}, \bibinfo{pages}{13}.
\bibitem[{Kumar et~al.(2020)Kumar, Rani, Kumar, Sharma and
  Purohit}]{kumar2020data}
\bibinfo{author}{Kumar, A.}, \bibinfo{author}{Rani, P.},
  \bibinfo{author}{Kumar, R.}, \bibinfo{author}{Sharma, V.},
  \bibinfo{author}{Purohit, S.R.}, \bibinfo{year}{2020}.
\newblock \bibinfo{title}{Data-driven modelling and prediction of {COVID}-19
  infection in {I}ndia and correlation analysis of the virus transmission with
  socio-economic factors}.
\newblock \bibinfo{journal}{Diabetes \& Metabolic Syndrome: Clinical Research
  \& Reviews} \bibinfo{volume}{14}, \bibinfo{pages}{1231--1240}.
\bibitem[{Li et~al.(2020a)Li, Guan, Wu, Wang, Zhou, Tong, Ren, Leung, Lau, Wong
  et~al.}]{li2020early}
\bibinfo{author}{Li, Q.}, \bibinfo{author}{Guan, X.}, \bibinfo{author}{Wu, P.},
  \bibinfo{author}{Wang, X.}, \bibinfo{author}{Zhou, L.},
  \bibinfo{author}{Tong, Y.}, \bibinfo{author}{Ren, R.},
  \bibinfo{author}{Leung, K.S.}, \bibinfo{author}{Lau, E.H.},
  \bibinfo{author}{Wong, J.Y.}, et~al., \bibinfo{year}{2020}a.
\newblock \bibinfo{title}{Early transmission dynamics in {W}uhan, {C}hina, of
  novel coronavirus--infected pneumonia}.
\newblock \bibinfo{journal}{New England Journal of Medicine}
  \bibinfo{volume}{382}, \bibinfo{pages}{1199--1207}.
\bibitem[{Li et~al.(2020b)Li, Wang, Peng and Shen}]{li2020basic}
\bibinfo{author}{Li, Y.}, \bibinfo{author}{Wang, L.W.}, \bibinfo{author}{Peng,
  Z.H.}, \bibinfo{author}{Shen, H.B.}, \bibinfo{year}{2020}b.
\newblock \bibinfo{title}{Basic reproduction number and predicted trends of
  coronavirus disease 2019 epidemic in the mainland of {C}hina}.
\newblock \bibinfo{journal}{Infectious Diseases of Poverty}
  \bibinfo{volume}{9}, \bibinfo{pages}{1--13}.
\bibitem[{Ma(2020)}]{ma2020estimating}
\bibinfo{author}{Ma, J.}, \bibinfo{year}{2020}.
\newblock \bibinfo{title}{Estimating epidemic exponential growth rate and basic
  reproduction number}.
\newblock \bibinfo{journal}{Infectious Disease Modelling} \bibinfo{volume}{5},
  \bibinfo{pages}{129--141}.
\bibitem[{Marino et~al.(2008)Marino, Hogue, Ray and
  Kirschner}]{marino2008methodology}
\bibinfo{author}{Marino, S.}, \bibinfo{author}{Hogue, I.B.},
  \bibinfo{author}{Ray, C.J.}, \bibinfo{author}{Kirschner, D.E.},
  \bibinfo{year}{2008}.
\newblock \bibinfo{title}{A methodology for performing global uncertainty and
  sensitivity analysis in systems biology}.
\newblock \bibinfo{journal}{Journal of Theoretical Biology}
  \bibinfo{volume}{254}, \bibinfo{pages}{178--196}.
\bibitem[{Martcheva(2015)}]{martcheva2015introduction}
\bibinfo{author}{Martcheva, M.}, \bibinfo{year}{2015}.
\newblock \bibinfo{title}{An introduction to Mathematical Epidemiology}.
\newblock \bibinfo{publisher}{New York: Springer}.
\bibitem[{Mushayabasa et~al.(2020)Mushayabasa, Ngarakana-Gwasira and
  Mushanyu}]{mushayabasa2020role}
\bibinfo{author}{Mushayabasa, S.}, \bibinfo{author}{Ngarakana-Gwasira, E.T.},
  \bibinfo{author}{Mushanyu, J.}, \bibinfo{year}{2020}.
\newblock \bibinfo{title}{On the role of governmental action and individual
  reaction on {COVID}-19 dynamics in {S}outh {A}frica: A mathematical modelling
  study}.
\newblock \bibinfo{journal}{Informatics in Medicine Unlocked}
  \bibinfo{volume}{20}, \bibinfo{pages}{100387}.
\bibitem[{Mwalili et~al.(2020)Mwalili, Kimanthi, Ojiambo, Gathungu and
  Mbogo}]{mwalili2020seir}
\bibinfo{author}{Mwalili, S.}, \bibinfo{author}{Kimanthi, M.},
  \bibinfo{author}{Ojiambo, V.}, \bibinfo{author}{Gathungu, D.},
  \bibinfo{author}{Mbogo, R.W.}, \bibinfo{year}{2020}.
\newblock \bibinfo{title}{{SEIR} model for {COVID}-19 dynamics incorporating
  the environment and social distancing}.
\newblock \bibinfo{journal}{BMC Research Notes} \bibinfo{volume}{13},
  \bibinfo{pages}{352}.
\bibitem[{Ndairou et~al.(2020)Ndairou, Area, Nieto and
  Torres}]{ndairou2020mathematical}
\bibinfo{author}{Ndairou, F.}, \bibinfo{author}{Area, I.},
  \bibinfo{author}{Nieto, J.J.}, \bibinfo{author}{Torres, D.F.},
  \bibinfo{year}{2020}.
\newblock \bibinfo{title}{Mathematical modeling of {COVID}-19 transmission
  dynamics with a case study of {W}uhan}.
\newblock \bibinfo{journal}{Chaos, Solitons \& Fractals} \bibinfo{volume}{135},
  \bibinfo{pages}{109846}.
\bibitem[{Orwa et~al.(2019)Orwa, Mbogo and Luboobi}]{orwa2019uncertainty}
\bibinfo{author}{Orwa, T.O.}, \bibinfo{author}{Mbogo, R.W.},
  \bibinfo{author}{Luboobi, L.S.}, \bibinfo{year}{2019}.
\newblock \bibinfo{title}{Uncertainty and sensitivity analysis applied to an
  in-host malaria model with multiple vaccine antigens}.
\newblock \bibinfo{journal}{International Journal of Applied and Computational
  Mathematics} \bibinfo{volume}{5}, \bibinfo{pages}{73}.
\bibitem[{Pang et~al.(2020)Pang, Liu, Zhang, Tian and
  Zhao}]{pang2020transmission}
\bibinfo{author}{Pang, L.}, \bibinfo{author}{Liu, S.}, \bibinfo{author}{Zhang,
  X.}, \bibinfo{author}{Tian, T.}, \bibinfo{author}{Zhao, Z.},
  \bibinfo{year}{2020}.
\newblock \bibinfo{title}{Transmission dynamics and control strategies of
  {COVID}-19 in {W}uhan, {C}hina}.
\newblock \bibinfo{journal}{Journal of Biological Systems}
  \bibinfo{volume}{28}, \bibinfo{pages}{543--560}.
\bibitem[{Qianying et~al.(2020)Qianying, Shi, Daozhou, Yijun, Shu, Salihu,
  Maggie, Yongli, Weiming, Lin and Daihai}]{qianying2020conceptual}
\bibinfo{author}{Qianying, L.}, \bibinfo{author}{Shi, Z.},
  \bibinfo{author}{Daozhou, G.}, \bibinfo{author}{Yijun, L.},
  \bibinfo{author}{Shu, Y.}, \bibinfo{author}{Salihu, S.M.},
  \bibinfo{author}{Maggie, H.W.}, \bibinfo{author}{Yongli, C.},
  \bibinfo{author}{Weiming, W.}, \bibinfo{author}{Lin, Y.},
  \bibinfo{author}{Daihai, H.}, \bibinfo{year}{2020}.
\newblock \bibinfo{title}{A conceptual model for the coronavirus disease 2019
  ({COVID}-19) outbreak in {W}uhan, {C}hina with individual reaction and
  governmental action}.
\newblock \bibinfo{journal}{International Journal of Infectious Diseases}
  \bibinfo{volume}{93}, \bibinfo{pages}{211--216}.
\bibitem[{Rahman et~al.(2020)Rahman, Sadraddin and Porreca}]{rahman2020basic}
\bibinfo{author}{Rahman, B.}, \bibinfo{author}{Sadraddin, E.},
  \bibinfo{author}{Porreca, A.}, \bibinfo{year}{2020}.
\newblock \bibinfo{title}{The basic reproduction number of {SARS}-{C}ov-2 in
  {W}uhan is about to die out, how about the rest of the {W}orld?}
\newblock \bibinfo{journal}{Reviews in Medical Virology} \bibinfo{volume}{30},
  \bibinfo{pages}{e2111}.
\bibitem[{Ritchie(2020)}]{ritchie2020coronavirus}
\bibinfo{author}{Ritchie, H.}, \bibinfo{year}{2020}.
\newblock \bibinfo{title}{Coronavirus source data}.
\newblock \bibinfo{journal}{Our World in Data} .
\bibitem[{Roberts and Heesterbeek(2007)}]{roberts2007model}
\bibinfo{author}{Roberts, M.}, \bibinfo{author}{Heesterbeek, J.},
  \bibinfo{year}{2007}.
\newblock \bibinfo{title}{Model-consistent estimation of the basic reproduction
  number from the incidence of an emerging infection}.
\newblock \bibinfo{journal}{Journal of Mathematical Biology}
  \bibinfo{volume}{55}, \bibinfo{pages}{803}.
\bibitem[{Rohith and Devika(2020)}]{rohith2020dynamics}
\bibinfo{author}{Rohith, G.}, \bibinfo{author}{Devika, K.},
  \bibinfo{year}{2020}.
\newblock \bibinfo{title}{Dynamics and control of {COVID}-19 pandemic with
  nonlinear incidence rates}.
\newblock \bibinfo{journal}{Nonlinear Dynamics} \bibinfo{volume}{101},
  \bibinfo{pages}{2013--2026}.
\bibitem[{Rong et~al.(2020)Rong, Yang, Chu and Fan}]{rong2020effect}
\bibinfo{author}{Rong, X.}, \bibinfo{author}{Yang, L.}, \bibinfo{author}{Chu,
  H.}, \bibinfo{author}{Fan, M.}, \bibinfo{year}{2020}.
\newblock \bibinfo{title}{Effect of delay in diagnosis on transmission of
  {COVID}-19}.
\newblock \bibinfo{journal}{Mathematical Biosciences and Engineering}
  \bibinfo{volume}{17}, \bibinfo{pages}{2725--2740}.
\bibitem[{Sardar et~al.(2020)Sardar, Nadim, Rana and
  Chattopadhyay}]{sardar2020assessment}
\bibinfo{author}{Sardar, T.}, \bibinfo{author}{Nadim, S.S.},
  \bibinfo{author}{Rana, S.}, \bibinfo{author}{Chattopadhyay, J.},
  \bibinfo{year}{2020}.
\newblock \bibinfo{title}{Assessment of lockdown effect in some states and
  overall {I}ndia: {A} predictive mathematical study on {COVID}-19 outbreak}.
\newblock \bibinfo{journal}{Chaos, Solitons \& Fractals} \bibinfo{volume}{139},
  \bibinfo{pages}{110078}.
\bibitem[{Wu et~al.(2013)Wu, Dhingra, Gambhir and Remais}]{wu2013sensitivity}
\bibinfo{author}{Wu, J.}, \bibinfo{author}{Dhingra, R.},
  \bibinfo{author}{Gambhir, M.}, \bibinfo{author}{Remais, J.V.},
  \bibinfo{year}{2013}.
\newblock \bibinfo{title}{Sensitivity analysis of infectious disease models:
  methods, advances and their application}.
\newblock \bibinfo{journal}{Journal of The Royal Society Interface}
  \bibinfo{volume}{10}, \bibinfo{pages}{20121018}.
\bibitem[{Zeb et~al.(2020)Zeb, Alzahrani, Erturk and
  Zaman}]{zeb2020mathematical}
\bibinfo{author}{Zeb, A.}, \bibinfo{author}{Alzahrani, E.},
  \bibinfo{author}{Erturk, V.S.}, \bibinfo{author}{Zaman, G.},
  \bibinfo{year}{2020}.
\newblock \bibinfo{title}{Mathematical model for coronavirus disease 2019
  ({COVID}-19) containing isolation class}.
\newblock \bibinfo{journal}{BioMed Research International} \bibinfo{volume}{7},
  \bibinfo{pages}{3452402}.
\bibitem[{Zhang et~al.(2020)Zhang, Xu, Li and Cao}]{zhang2020novel}
\bibinfo{author}{Zhang, Y.}, \bibinfo{author}{Xu, J.}, \bibinfo{author}{Li,
  H.}, \bibinfo{author}{Cao, B.}, \bibinfo{year}{2020}.
\newblock \bibinfo{title}{A novel coronavirus ({COVID}-19) outbreak: a call for
  action}.
\newblock \bibinfo{journal}{Chest} \bibinfo{volume}{157},
  \bibinfo{pages}{e99--e101}.

\end{thebibliography}
	
\end{document}